% LLNCS macro package for Springer Computer Science proceedings;
% Version 2.20 of 2017/10/04
%
\documentclass[runningheads]{llncs}
\usepackage{graphicx}
% Used for displaying a sample figure. If possible, figure files should
% be included in EPS format.
%

\usepackage{algorithm}
\usepackage[noend]{algpseudocode}
\algnewcommand{\lIf}[1]{\State\algorithmicif\ #1\ \algorithmicthen}

\usepackage{newfloat}
\usepackage{listings}
\usepackage{amsmath}
\usepackage{lipsum}
\usepackage{caption}
\usepackage{subcaption}
\usepackage{xcolor}
\usepackage{multirow, makecell}
\usepackage{pbox}
\usepackage{booktabs}
\usepackage{tablefootnote}

%artifact badges
%\usepackage{showframe}
\usepackage[firstpage]{draftwatermark}

% additional packages
\usepackage{amssymb,amsmath,amsfonts}
\allowdisplaybreaks

% Paper-specific macros
\newcommand{\HC}{\mathcal{H}}

\newcommand{\CF}{\mathcal{K}}
\newcommand{\LF}{L}
\newcommand{\CII}{\mathcal{C}}

\newcommand{\lang}[1]{\mathcal{L}(#1)}

%% Floor and Ceiling Commands %%
\newcommand{\floor}[1]{\left\lfloor #1 \right\rfloor}
\newcommand{\ceil}[1]{\left\lceil #1 \right\rceil}

\newcommand{\cost}{c}

%% Bibliography %%
\usepackage[colorlinks]{hyperref}
% If you use the hyperref package, please uncomment the following line
% to display URLs in blue roman font according to Springer's eBook style:

\begin{document}
\title{Randomized Synthesis for Diversity and Cost Constraints with Control Improvisation\thanks{The two first authors contributed equally to the paper.}}
\titlerunning{Randomized Synthesis for Diversity and Cost Constraints with CI}
% If the paper title is too long for the running head, you can set
% an abbreviated paper title here
%
\author{Andreas Gittis* \and Eric Vin* \and\\ Daniel J. Fremont}%\orcidID{0000-0002-9992-9965}}
\authorrunning{A. Gittis*, E. Vin*, and D. J. Fremont}
% First names are abbreviated in the running head.
%
\institute{University of California, Santa Cruz, USA\\
\email{\{agittis,evin,dfremont\}@ucsc.edu}}
\maketitle              % typeset the header of the contribution
%

%can also use 0.1cm, 9cm for \hspaces
\SetWatermarkAngle{0}
\SetWatermarkText{\raisebox{13.1cm}{%
{\hypersetup{hidelinks}\href{https://doi.org/10.5281/zenodo.6558391}{\includegraphics{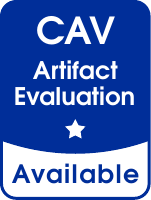}}}% 
\hspace{2.75in}%
\includegraphics{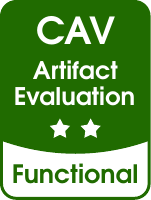}%
}}

\begin{abstract}
In many synthesis problems, it can be essential to generate implementations which not only satisfy functional constraints but are also \emph{randomized} to improve variety, robustness, or unpredictability. The recently-proposed framework of control improvisation (CI) provides techniques for the correct-by-construction synthesis of randomized systems subject to hard and soft constraints. However, prior work on CI has focused on qualitative specifications, whereas in robotic planning and other areas we often have quantitative quality metrics which can be traded against each other. For example, a designer of a patrolling security robot might want to know by how much the average patrol time needs to be increased in order to ensure that a particular aspect of the robot's route is sufficiently diverse and hence unpredictable. In this paper, we enable this type of application by generalizing the CI problem to support quantitative soft constraints which bound the expected value of a given cost function, and randomness constraints which enforce diversity of the generated traces with respect to a given label function. We establish the basic theory of labelled quantitative CI problems, and develop efficient algorithms for solving them when the specifications are encoded by finite automata. We also provide an approximate improvisation algorithm based on constraint solving for any specifications encodable as Boolean formulas. We demonstrate the utility of our problem formulation and algorithms with experiments applying them to generate diverse near-optimal plans for robotic planning problems.
\end{abstract}

\section{Introduction}
\label{sec:intro}

Correct-by-construction synthesis of systems from high-level specifications has become a popular paradigm in fields ranging from circuit design~\cite{BLOEM20073} to robotic task planning~\cite{kress2009temporal}.
Synthesis techniques for many different types of specifications have been developed, especially for temporal logic formulas, which can encode many properties of interest~\cite{finkbeiner-survey}.
One less-studied type of specification are \emph{randomness constraints} that require the system's behavior to be sufficiently random, for instance by being close to a uniform distribution over the set of allowed behaviors.
Such specifications are useful in many applications, as randomness can provide robustness, variety, and unpredictability to a system.
For example, fuzz testing tools often use constraints to select classes of inputs which are more likely to trigger bugs, but then search randomly within that class to prevent bias~\cite{fuzzing-book}.
In robotic planning, a patrolling security robot that uses a fixed plan satisfying its requirements might be vulnerable to exploitation; adding randomness to make its route unpredictable can make exploitation more difficult.

While there has been substantial work on synthesis with stochastic environments (e.g.~\cite{multiobjective,almagor-kupferman}), randomness constraints require the system itself to behave randomly even if the environment is deterministic.
Furthermore, unlike most specifications used in synthesis, randomness constraints are properties not of individual behaviors but rather of their distribution, and they cannot be concisely encoded into existing specification formalisms like PCTL~\cite{pctl} and SGL~\cite{sgl}.
As a result, synthesis of systems under such constraints requires new techniques.

A recently-proposed paradigm for the correct-by-construction synthesis of systems under randomness constraints is \emph{algorithmic improvisation}~\cite{donze:ci_music_paper,fremont:ci_original,fremont:thesis}.
Algorithmic improvisation comprises a class of synthesis problems whose goal is to construct a randomized algorithm, an \emph{improviser}, satisfying three kinds of constraints: \emph{hard constraints} that the improviser's output must always satisfy, \emph{soft constraints} that need only be satisfied to a certain (tunable) extent, and \emph{randomness constraints} requiring the output to be sufficiently random.
These types of constraints correspond to natural requirements arising for example in robot planning: the hard constraints can encode safety or other functional requirements, the soft constraints can encode notions of efficiency or optimality, and the randomness constraints enforce diversity or unpredictability.
The original and most-studied form of algorithmic improvisation is the \emph{control improvisation} (CI) problem (introduced in \cite{donze:ci_music_techreport} and formalized in \cite{fremont:ci_original,fremont:ci_extended}), where the improviser generates finite sequences of symbols, the hard constraint is a trace property, the soft constraint requires some trace property hold with at least a desired probability, and the randomness constraint puts upper and lower bounds on the probability of individual outputs.
Control improvisation and its extensions have been successfully used for musical improvisation~\cite{donze:ci_music_paper}, robotic planning~\cite{fremont:reactive_ci}, human modeling subject to constraints~\cite{ci_iotdi}, and generating synthetic datasets for testing and training cyber-physical systems with machine learning components~\cite{scenic}.

However, the prior work on CI is not general enough to cover many randomized synthesis problems of interest, for two reasons.
First, many planning, design space exploration, and other problems come with a \emph{cost function} expressing how optimal a particular solution is; in the setting of generating randomized solutions, the most natural soft constraint would be to require that the \emph{expected cost} of a solution should be low, so that we can obtain a diverse set of near-optimal solutions.
In a patrolling robot application, for example, the fastest patrol route might be unique and so predictable, and we then want to know by how much we would need to increase the average patrol time in order to enable a sufficiently-diverse set of routes.
The prior work on CI cannot provide such an analysis.%, supporting soft constraints requiring that some Boolean property hold with at least a given probability.

Second, while the CI randomness constraint is sufficient to make the improviser's exact output unpredictable, it is \emph{not} sufficient to ensure diversity when many outputs are similar to each other.
Continuing our patrolling robot example, suppose that the robot has a choice of two rooms to go through: one room is larger, and so there are (say) $10^6$ possible paths through it, vs. only $10^3$ through the other room.
Even if a perfectly-uniform distribution over all these paths is possible given our other constraints, the robot will end up entering the larger room almost all of the time.
But from the point of view of an adversary that wishes to avoid being seen by the robot, the exact path is not relevant: what matters is \emph{which room} the robot will enter, and that is highly predictable.
For this application, we need a randomness requirement that enforces diversity not over the output of the improviser, but over some \emph{attribute} of the output.

To enable such applications, in this paper we introduce the concept of \emph{Labelled Quantitative Control Improvisation} (LQCI).
This problem extends CI with a soft constraint bounding the expected cost of generated traces, and a randomness constraint requiring near-uniformity of the \emph{label} of a trace, given by an arbitrary label function.
We study the theory of LQCI, establishing precise conditions for when an LQCI problem is solvable and a general construction for solving it.
We use our construction to develop efficient improvisation algorithms for a broad class of specifications given by finite automata, including common cost functions such as mission time or path length.
For specifications not easily encoded to (reasonably-sized) automata, we provide an approximate improvisation algorithm based on constraint solving that handles symbolic specifications encoded as Boolean formulas.
We also explore an extension of the LQCI problem for finding the \emph{maximum-entropy} distribution satisfying the other constraints (as in \cite{vazquezchanlatte:maximum_entropy_ci}), and develop an algorithm for solving it using convex optimization.
Finally, we conduct a case study demonstrating that our approach allows us to formalize and solve realistic robotic planning problems.

In summary, the main contributions of this paper are:
\begin{itemize}
    \item The labelled quantitative control improvisation problem definition (Sec.~\ref{sec:prob_def});
    \item A characterization of which LQCI problems are solvable, and a general construction for solving them (Sec.~\ref{sec:theory});
    \item Efficient improvisation algorithms for finite automata specifications (Sec.~\ref{sec:exact-schemes});
    \item An approximate algorithm for Boolean formula specifications (Sec.~\ref{sec:approx_lqci_alg});
    \item An algorithm for maximum-entropy LQCI problems (Sec.~\ref{sec:maxent});
    \item Experiments using our algorithms for robotic planning (Sec.~\ref{sec:experiments}).
\end{itemize}
We conclude in Sec.~\ref{sec:conclusion} with a summary of results and directions for future work.
For brevity, we defer full proofs of all results to the Appendix.

\section{Overview and Problem Definition}
\label{sec:prob_def}

In this section we formally define the LQCI problem, first using applications to robotic planning and fuzz testing to motivate various aspects of our definitions.
We will return to the robotic planning example for our experiments in Sec.~\ref{sec:experiments}.

\subsection{Motivating Examples}

\subsubsection{Robotic Planning}

Consider the problem of generating a path for a package delivery robot, where the robot should efficiently visit various drop-off points, visiting charging stations as necessary along the way.
Discretizing the world into a grid, we can represent a path as a finite sequence of north, south, east, and west moves.
We might have various requirements for such paths, falling into the three types of constraints of a control improvisation problem described above: hard constraints such as completing mission objectives and not navigating into impassable terrain, soft constraints such as preferring shorter paths, and randomness constraints to ensure the chosen path is unpredictable.
However, as we saw in Sec.~\ref{sec:intro}, randomness over paths can be less important than randomness over specific features of a path: here, it might be that charging leaves the robot vulnerable for an extended period, so that it is important to limit the extent to which an adversary can predict ahead of time which charging station will be used.
If there are 3 charging stations, then all possible paths are divided into 3 classes, and we want the class of a generated path to be unpredictable; we can formalize this as a \emph{label function} which assigns labels to paths, and require that the distribution over labels be close to uniform.
Since we do not want to simply pick a single path from each label class, we can also enforce randomness within each class, either by bounding the conditional probabilities of paths (so that no path is too likely relative to others in its class) or by taking the maximum-entropy distribution that satisfies our randomness-over-labels condition (we will return to this approach in Sec.~\ref{sec:maxent}).

For efficiency, we want our robot to use routes which are as fast as possible, taking into account varying terrain.
We could model this using a \emph{cost function} assigning numerical costs to each path: here, the total time needed to traverse it.
However, as mentioned in Sec.~\ref{sec:intro}, prior work on CI can only encode Boolean soft constraints, such as requiring the cost of a path to be at most $5$ with probability at least $0.9$.
While this does allow for some control over the cost, it requires setting an arbitrary threshold, and otherwise ignores the actual values of the cost; thus, a path of cost 6 is treated no differently than a path of cost $10^5$.
Instead, we want to bound the \emph{expected} cost of a path, so that both the probabilities of individual paths and their absolute costs are taken into account.

%An improviser may be able to increase the diversity of trajectories by guiding the robot through difficult to traverse terrain, but by in doing so it may also raise the expected cost. An improved Control Improvisation problem should be able to balance these two needs to achieve a trade off that meets the goals of the user.

Putting all this together, we define our example planning problem as generating paths through the grid worlds in Fig.~\ref{fig:rp_grid_worlds}, subject to the following constraints:
    
    \begin{enumerate}
        \item[~] \textbf{Hard Constraint:}
            \begin{enumerate}
                \item The robot must begin in the start cell \textbf{S} and must end in the end cell \textbf{E}.
                \item The robot must visit all package drop-off points \textbf{O}.
                \item The robot must charge at a charging station \textbf{C}.
                \item The robot must not enter impassable locations \textbf{X}.
            \end{enumerate}
        \item[~] \textbf{Cost Constraint:}\\
        The expected time to complete the mission must be at most a constant $c$.
        \item[~] \textbf{Randomness over Labels:}\\
        For each choice of charging station, the chance that the robot uses that station must be at least $\lambda$ and at most $\rho$.
        \item[~] \textbf{Randomness over Words:}\\
        Conditioned on selecting a certain charging station, the probability of picking any path must be at least $\alpha$ and at most $\beta$.
    \end{enumerate}

Here, we assume that each grid cell has a cost representing how long it takes to traverse, with the cost of a path (the total mission time) being the sum of the costs of its cells.
In Fig.~\ref{fig:rp_grid_worlds}, we show higher-cost cells as being darker, with the costs ranging from 0--3 for the small world and 0--10 for the large world.
The layout of the map was chosen to admit a variety of different paths, motivated as follows: we envision an impassible river dividing the top and bottom halves of the map, with one low-cost bridge and two high-cost fords.
The top-left charging station is a windmill and requires climbing a hill to access; there is also a hydroelectric station next to the river, and an easily-accessible substation near the main north-south road.

    \begin{figure}[tb]
        \centering
        \begin{subfigure}[b]{0.33\textwidth}
            \includegraphics[width=\textwidth]{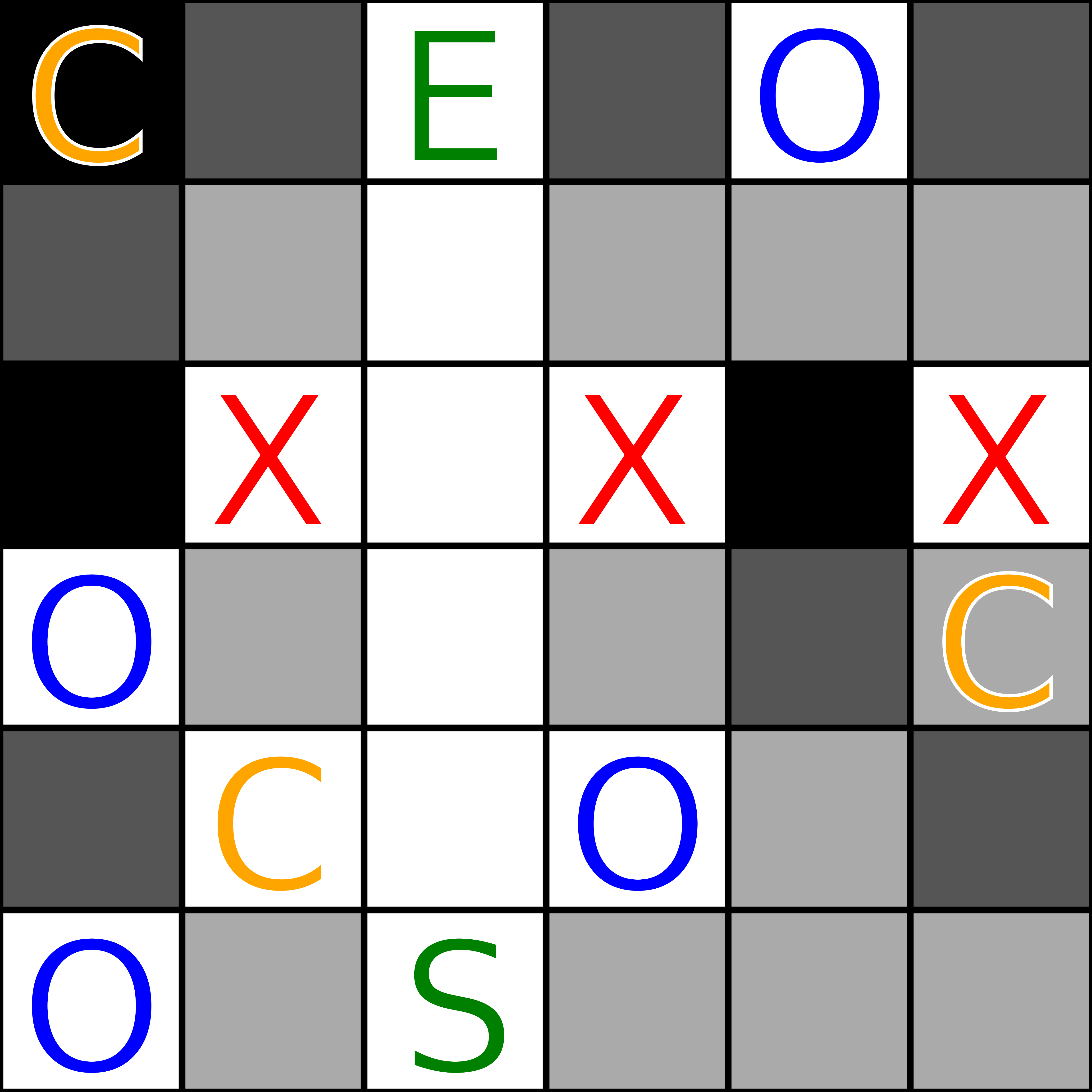}
            \caption{Small Grid World (6x6)}
        \end{subfigure}
        \hspace{1in}
        \begin{subfigure}[b]{0.33\textwidth}
            \includegraphics[width=\textwidth]{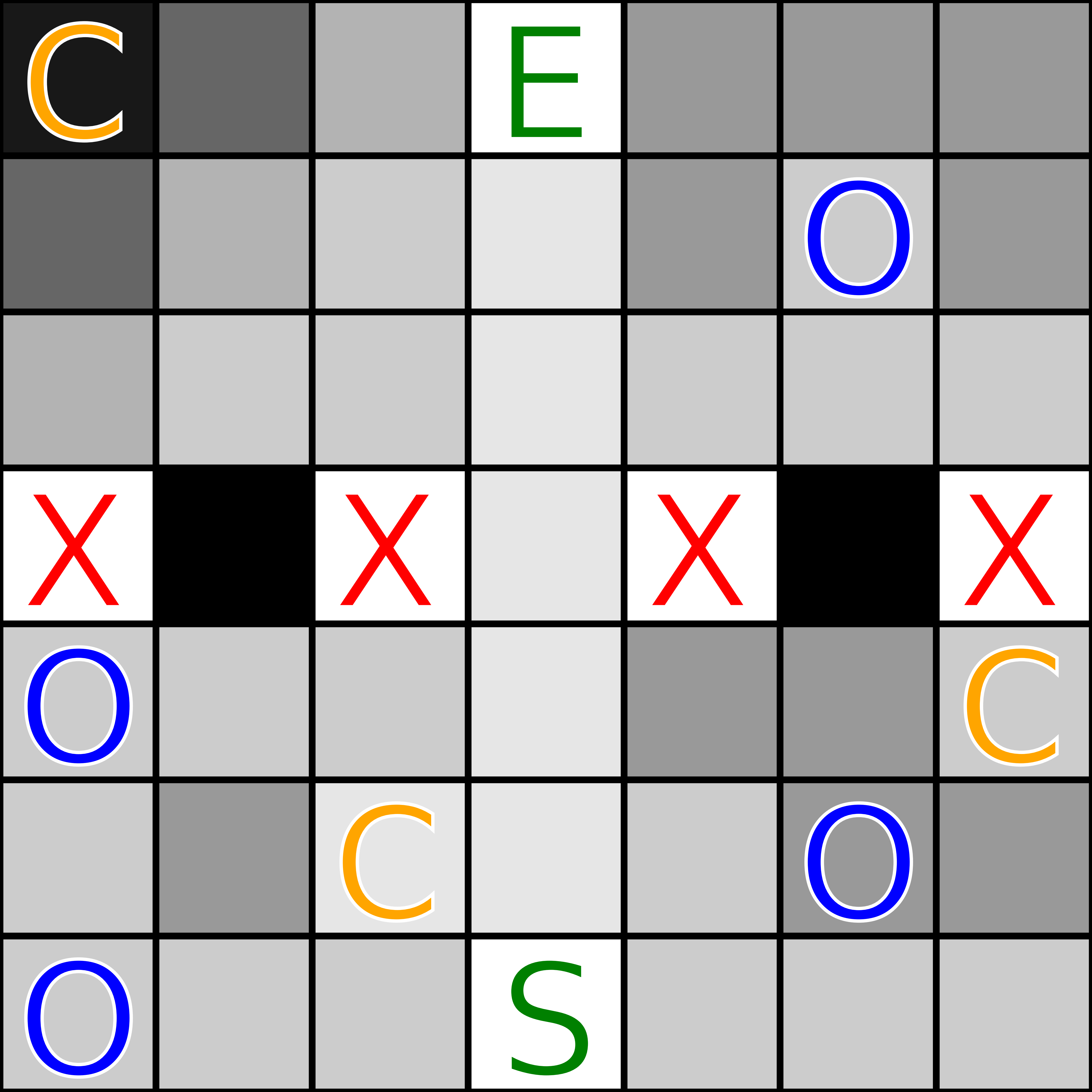}
            \caption{Large Grid World (7x7)}
        \end{subfigure}
        \caption{Grid worlds for our robotic planning example. Darker background indicates higher cost and letters indicate: start and end points (\textbf{S}, \textbf{E}), impassable locations (\textbf{X}), delivery locations (\textbf{O}), charging stations (\textbf{C}).}
        \label{fig:rp_grid_worlds}
    \end{figure}

%The problem and associated algorithms we will introduce in this paper for \textit{Labelled Quantitative Control Improvisation} will allow us to use these new and more powerful constraints, to balance the need for control and randomness with rigorous guarantees.

\subsubsection{Fuzz Testing}

Prior work has shown that a variety of programs and protocols can be comprehensively tested by randomly sampling from automata encoding constraints on acceptable tests~\cite{automata_testing}.
LQCI allows us to preserve such guarantees while exercising additional control over which tests are generated.

As an example, consider the problem of generating randomized network activity for a set of devices communicating over TCP; this could be useful to test robustness of a network monitoring application or network stack.
There are a variety of different constraints we might wish to impose on the sequences of packets we generate: each connection should conform to the TCP protocol, so that the tests are meaningful\footnote{We might also want to generate tests that \emph{deviate} from the protocol. This could be done in a variety of ways, e.g. modifying our constraints to allow certain types of deviations, or first generating tests that conform to the protocol and subsequently mutating them.}; tests should exhibit a variety of different behaviors such as successful/failed connections, interleaving of packets between different connections, etc.; and tests should be as short as possible while still exhibiting these different behaviors, so that we can maximize the number of tests we can perform in a given time.
These constraints have trade-offs: for example, tests with failed connections that must be retried will necessarily be longer.
As in the robotic planning example, we formulate these requirements as cost and label constraints, which allow us to balance our randomness and control needs.

For concreteness, consider the specific example of generating packet traces for 5 systems communicating over TCP.
Our hard constraint can enforce that each connection follows the TCP protocol, using an encoding of the operation of the protocol as a finite automaton~\cite{tcp_state_machine} (we will present efficient algorithms for LQCI with automata specifications below).
Our cost function can assign a cost equal to the length of the trace, so that we prefer shorter sequences (whereas if we simply sampled uniformly from the language of the TCP automaton up to some length, longer sequences would be generated more frequently as there are exponentially more of them).
Our label function could use two labels, distinguishing traces with connections that terminate cleanly from those that involve system failures and timeouts (we could also further subdivide into several types of failures).
There are many more ways for a connection to fail than to terminate cleanly, and these two classes of traces might have significantly different lengths on average, but we want to ensure that our tests cover both cases adequately.
By imposing constraints on the expected cost of a trace, as well as randomness constraints over the label and within each label class, we can control test length while enforcing sufficient diversity among the tests.
In fact, we will see below that our LQCI algorithms can find the \emph{minimum-cost} distribution consistent with the randomness constraints, thereby allowing us to test as efficiently as possible given coverage requirements.

\subsection{Problem Definition}

To formalize synthesis problems like those described above, we define the LQCI problem.
Following the definition of CI~\cite{fremont:ci_original,fremont:ci_extended}, we frame the problem as sampling words over a finite alphabet $\Sigma$ subject to several constraints.
We use the general term \emph{specification} to refer to an encoding of a property of words (a language): for example, a deterministic finite automaton (DFA) is a specification, where the DFA accepts a word if and only if it satisfies the specification; a Boolean formula is another kind of specification.
The complexity of the LQCI problem will vary depending on the type of specifications used, as we will see later.

	\begin{definition}
    	A \emph{Labelled Quantitative Control Improvisation (LQCI) instance} over an alphabet $\Sigma$ is a tuple $\CII= (\HC, \CF, \LF, m, n, \cost, \lambda, \rho, \hat{\alpha}, \hat{\beta})$ which contains:

    	\begin{itemize}
    		\item $m,n \in \mathbb{N}$, lower and upper bounds on word length (with $m \leq n$);
	        \item $\HC$, a hard specification that must be satisfied by all words;
	        \item $\CF :\Sigma^* \rightarrow \mathbb{Q}$, a cost function mapping words to rational costs;
	        \item $\LF: \Sigma^* \rightarrow \Omega$, a label function mapping words to a finite set of labels $\Omega = \{ \ell_1, \hdots \ell_{|\Omega|} \}$;
	        \item $\cost \in \mathbb{Q}^+$, an upper bound on expected cost;
	        \item $\lambda, \rho \in \mathbb{Q}$, lower and upper bounds on the marginal probability of selecting a word with a certain label (with $0 \leq \lambda \leq \rho \leq 1$);
	        \item $\hat{\alpha}_i, \hat{\beta}_i \in \mathbb{Q}$, lower and upper bounds on the conditional probability of words in label class $\ell_i$ (with $0 \leq \hat{\alpha}_i \leq \hat{\beta}_i \leq 1$ for all $i$).
	    \end{itemize}
	\end{definition}

We note that the specifications and functions above are abstract, and our definition does not make any assumptions about how they will be encoded in a particular problem. For example, the hard constraint $\HC$ over words might be instantiated as the language of a DFA, context-free grammar, etc. Later in the paper we will develop algorithms for solving classes of LQCI instances with specification formalisms that satisfy certain properties.

The restriction to finite traces (via the length bounds $m$ and $n$) is consistent with prior work on using CI for robotic planning~\cite{fremont:reactive_ci}: we frequently want plans that complete within a time limit.
Likewise in fuzz testing we want tests of bounded length.
Furthermore, as we will see, finite-trace LQCI is still a highly nontrivial problem, so we leave its extension to infinite traces as future work.

Given an LQCI instance, we define several convenient notations:
	
	\begin{itemize}
	    \item $\Sigma^{m:n}$ is all words satisfying the length bounds: $\{ w \in \Sigma^* \mid m \leq |w| \leq n \}$.
		\item The set of \emph{improvisations} $I$ consists of all words satisfying the length bounds and the hard specification. These are all the words which our improviser is allowed to generate.
		\item Since the length bounds $m,n$ ensure $I$ is finite, we can consider the image of $I$ under $\CF$, which must also be finite. We will refer to this set of \emph{possible costs} as $\Theta = \{ \theta_1, \hdots, \theta_{|\Theta|}\}$ (note that enumerating $\Theta$ may require an algorithm).
		\item The \emph{cost class} $I_{i,k}$ consists of all words with label $\ell_i$ and cost $\theta_k$ which satisfy the length bounds and the hard specification, i.e., $\{w \in \Sigma^{m:n} \mid w \in \lang{H}, L(w) = \ell_i, \CF(w) = \theta_k\}$.
		As the costs of all words in a cost class are equal, we may speak of the \emph{cost} of a cost class without ambiguity.
		\item The \emph{label class} $I_i$ consists of all words with label $\ell_i$ as above but any cost, i.e., $\bigcup_{k = 1}^{|\Theta|} I_{i,k}$.
		\item We write $\Pr[ X(w) \mid w \leftarrow D]$ for the probability (or $E[\dots]$ for the expected value) of $X(w)$ given that $w$ is sampled from distribution $D$.

	\end{itemize}

	\begin{definition}
	\label{def:improvising_dist}
		Given an LQCI instance $\CII$, a distribution $D$ over $\Sigma^*$ is an \emph{improvising distribution} for that instance if it satisfies the following constraints:
		
		\begin{enumerate}
			\item \textbf{Hard Constraint:} $ \displaystyle \Pr[w \in I \mid w \leftarrow D] = 1$
			
			\item \textbf{Cost Constraint:} $ \displaystyle E[\CF(w) \mid w \leftarrow D] \leq c$
		
			\item \textbf{Randomness over Labels:} $ \displaystyle \forall i \in \{1, \hdots, |\Omega| \}, \ \lambda \leq \Pr{[w \in I_i \mid w \leftarrow D]} \leq \rho $
			
			\item \textbf{Randomness over Words:} $ \displaystyle \forall i \in \{1, \hdots, |\Omega| \}, \ \forall y \in I_i, \newline \ \hat{\alpha}_i \leq \Pr[y = w \mid w \in I_i, w \leftarrow D] \leq \hat{\beta}_i$
		\end{enumerate}		
		
		We say that an LQCI instance is \emph{feasible} if there exists an improvising distribution for it (and \emph{infeasible} otherwise).
		An \emph{improviser} for an LQCI instance is a probabilistic algorithm which takes no input, has finite expected runtime, and whose output distribution is an improvising distribution.
		Given an LQCI instance $\CII$, the \emph{LQCI problem} is then to determine if $\CII$ is feasible, and, if so, to generate an improviser for $\CII$.
		Finally, an \emph{improvisation scheme} for a class of LQCI instances is a probabilistic algorithm with finite expected runtime that solves the LQCI problem for instances in that class.
		\end{definition}

As described in the preceding sections, the goal of our problem definition is to provide formal guarantees about the randomness of improvisations while respecting the various constraints.
In some applications, we may simply wish to maximize randomness: then precise control over the randomness parameters for each label class is not needed, and in fact finding values of $\hat{\alpha}_i, \hat{\beta}_i$ which maximize randomness while remaining feasible is nontrivial.
Building on our analysis of the basic LQCI problem in the next several sections, in Sec.~\ref{sec:maxent} we will introduce a \emph{maximum-entropy} version of LQCI which directly maximizes randomness without requiring $\hat{\alpha}_i$ and $\hat{\beta}_i$ to be explicitly specified.

\section{Feasibility Conditions and the Greedy Construction}
\label{sec:theory}

In this section, we introduce a greedy construction which will be used to provide necessary and sufficient conditions for an LQCI instance to be feasible.
This construction will also form the basis of the improvisation schemes presented later in the paper. For now, we will present the construction without assuming any particular specification formalism and ignoring algorithmic concerns: the description presented here will consider traces one by one and thus be inefficient. The next section will develop efficient implementations of these ideas.

The \emph{greedy LQCI construction} is separated into two phases. In the first phase, the \emph{greedy cost construction}, we define a distribution over each label class individually, greedily optimizing cost by giving as much weight as we can to the cheapest elements while respecting the randomness over words condition.
In the second phase, the \emph{greedy label construction}, we define a distribution over labels, greedily assigning maximum marginal probability to the label classes with the cheapest expected costs under the distributions from the first phase while respecting the randomness over labels condition.
The intuition is that we want to first make sampling within each label class as cheap as possible, and then sample from the cheapest classes as often as possible, while satisfying the randomness requirements.
We will prove below that this greedy approach in fact yields an improvising distribution whenever one exists.

\paragraph{\textbf{Toy Example.}}
We will begin with a toy example which illustrates the idea and correctness of the greedy construction. Suppose we want to sample from words of length 3 ($m=n=3$) over the binary alphabet $\Sigma = \{0,1\}$, subject to the hard constraint that each word must contain at least one 1.
We will have two label classes: words with an odd number of $1$s will be in label 1, and those with an even number in label 2.
The cost of each word will be its integer value in binary.
The label parameters will be $\lambda = 0.2$ and $\rho = 1.0$, so that each label must be sampled from with a probability at least 0.2 and at most 1.0.
The word randomness parameters will be $\hat{\alpha}_1 = \hat{\alpha}_2 = 0.1$ and $\hat{\beta}_1 = \hat{\beta}_2 = 0.5$, so that when sampling from a particular label class, each word in the class must be selected with probability at least 0.1 and at most 0.5.

Figure~\ref{fig:greedy_construction} shows the greedy construction applied to this LQCI instance.
Beginning with label 1, we need to construct a probability distribution over the words 001, 010, 100, and 111. We start by assigning 0.1 to each word, since $\hat{\alpha}_1 = 0.1$. Then we assign as much additional probability as we can (up to $\hat{\beta}_1 = 0.5$) to the cheapest words first until a total of 1 is reached, as shown in the bottom left of Fig.~\ref{fig:greedy_construction}.
The result is that there are 3 distinct probabilities within the label class: the minimum $\hat{\alpha}_1 = 0.1$, the maximum $\hat{\beta}_1 = 0.5$, and the overflow probability $0.3$ on the word 010. This process results in a distribution over label 1 with expected cost 2.2, the minimum achievable while satisfying the randomness over words constraint. A similar process yields a distribution of expected cost 4.1 on label 2. Now that we know the minimum expected cost for each label, we should sample from the cheaper label as frequently as possible. Since $\lambda = 0.2$ and $\rho = 1.0$, we sample from label 2 with probability 0.2 (the minimum allowed) and from label 1 with probability 0.8, yielding a distribution over improvisations with expected cost 2.58.
Our analysis will show that this is in fact the minimum possible expected cost over all distributions satisfying conditions (1), (3), and (4) in Def.~\ref{def:improvising_dist}.
So if the cost bound $c$ in the LQCI instance is at least this large, then we have an improvising distribution, and otherwise the instance is infeasible.

\begin{figure}[tb]
    \centering
    \includegraphics[width=\textwidth]{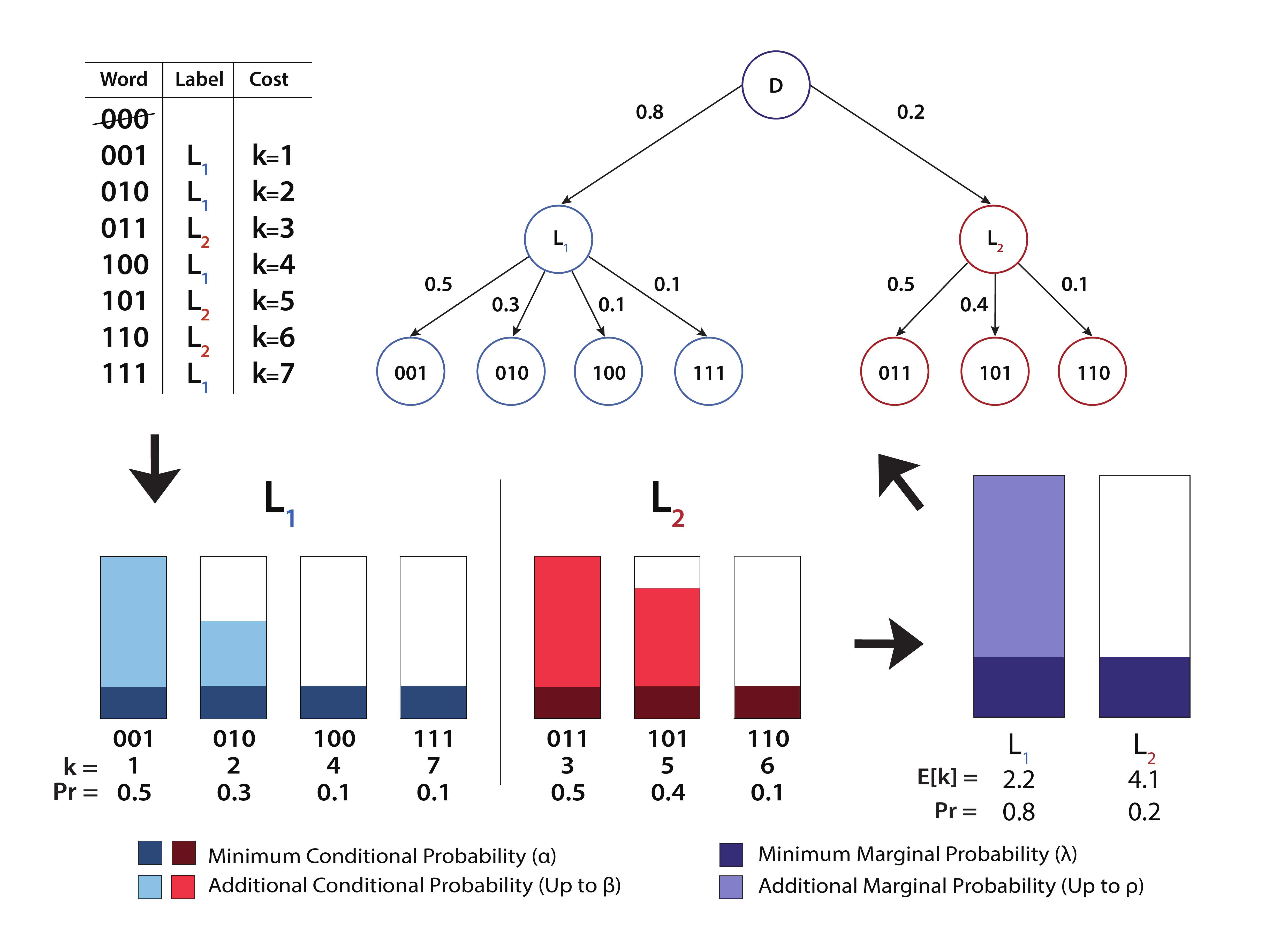}
    \caption{Applying the greedy LQCI construction to our toy example. Counter-clockwise from upper left: table of improvisations, the greedy cost construction, the greedy label construction, and the final improvising distribution.}
    \label{fig:greedy_construction}
\end{figure}

We now describe the two phases of our construction formally.

\noindent\textbf{\textit{The Greedy Cost Construction.}}
For a particular label class $i \in \{1, \hdots, |\Omega| \}$, we proceed as follows. Let $\delta^i = (\delta^i_1, \dots, \delta^i_{|\Theta|})$ be a list of all the cost classes $I_{i,k}$ with label $i$, sorted in increasing order of cost. Then fix $o_i = \frac{1 - \hat{\alpha}_i |I_i|}{\hat{\beta}_i - \hat{\alpha}_i}$, whose floor is the maximum number of words that can be assigned $\hat{\beta}_i$ probability (the maximum allowed) while still leaving at least $\hat{\alpha}_i$ probability (the minimum allowed) for each remaining word. Then, moving through the cost classes in the order given by $\delta^i$, we assign $\hat{\beta}_i$ probability to each word in the class, until we get to a class $\delta^i_r$ where the cumulative number of words so far (including the new class) would exceed $o_i$. To this class we assign $\hat{\beta}_i(o_i - \sum_{k=1}^{r-1} |\delta_k^i|) + \hat{\alpha}_i (\sum_{k=1}^r |\delta_k^i| - o_i)$ probability (spread uniformly over words in the class), the maximum allowed while leaving exactly $\hat{\alpha}_i$ for each remaining word.
Assigning $\hat{\alpha}_i$ to the remaining words, we obtain a distribution $D_i$ over the whole label class $I_i$.

We note that this process is not well-defined when $\hat{\alpha}_i =\hat{\beta}_i$ (in which case we simply assign probability $\hat{\alpha}_i$ to every word in $I_i$) or when $\hat{\alpha}_i |I_i| > 1$ (in which case the instance is infeasible due to $\hat{\alpha}_i$ being too large); also, the process does not result in a probability distribution if $\hat{\beta}_i |I_i| < 1$ (in which case the instance is infeasible due to $\hat{\beta}_i$ being too small).
Except in these cases, we get a well-defined distribution $D_i$ over $I_i$ which satisfies conditions (1) and (4) of Def.~\ref{def:improvising_dist}.
Moreover, the expected cost of $D_i$ is minimal among all such distributions, since it assigns as much weight as possible to the words with lowest cost.

\medskip

\noindent\textbf{\textit{The Greedy Label Construction.}}
Given the distributions $D_i$ for each label class $I_i$ from the first stage, we now choose a distribution over labels.
Following a similar pattern as before, let $\delta$ be a list of the distributions $D_i$ sorted in order of increasing expected cost.
Then fix $u = \lfloor \frac{1 - |\Omega| \lambda}{\rho - \lambda} \rfloor$, which is the number of label classes that can be assigned probability $\rho$ (the maximum allowed) while still leaving at least $\lambda$ (the minimum allowed) for each remaining class.
We assign $\rho$ probability to the first $u$ label classes in $\delta$.
To the next label class we assign probability $1 - \rho u  - \lambda (|\Omega| - u - 1)$, the maximum allowed while leaving exactly $\lambda$ for each remaining label class.
Finally, we assign $\lambda$ to all remaining label classes, and call the resulting distribution over labels $\hat{D}$.
Similar to before, this process will be well-defined and result in a distribution when $\frac{1}{\rho} \leq |\Omega| \leq \frac{1}{\lambda}$; otherwise, $\rho$ is too small or $\lambda$ is too large for condition (3) of Def.~\ref{def:improvising_dist} to be satisfied.

\medskip

To complete the construction, we obtain a final distribution $D$ over words by first sampling a label $i$ from $\hat{D}$ and then sampling from $D_i$.
The greedy cost construction ensured that $D_i$ is defined over the class $I_i \subseteq I$ and assigns probability between $\hat{\alpha}_i$ and $\hat{\beta}_i$ to each word, so $D$ will satisfy the hard and randomness over words constraints in Def.~\ref{def:improvising_dist}.
The greedy label construction ensures that $\hat{D}$ assigns probability between $\lambda$ and $\rho$ to each label, so $D$ will also satisfy the randomness over labels constraint.
Finally, since each phase selects a distribution of minimal cost amongst those satisfying the corresponding constraints, if \emph{any} improvising distribution exists then $D$ will have no greater cost, thereby satisfying the cost constraint and being an improvising distribution.
Formalizing this argument yields the following theorem (see the Appendix for a detailed proof):
		
	\begin{theorem}
	    \label{thm:lqci_greedy}
		An LQCI instance is feasible if and only if all of the following conditions are true:
		
		\begin{enumerate}
			\item $ \displaystyle \frac{1}{\rho} \leq |\Omega| \leq \frac{1}{\lambda}$
			\item $ \displaystyle \forall i \in \{1, \dots, |\Omega|\}, \ \frac{1}{\hat{\beta_i}} \leq |I_i| \leq \frac{1}{\hat{\alpha_i}}$
			\item The greedy LQCI construction produces a distribution $D$ whose expected cost is at most $c$ (i.e., $E[\CF(w) \mid w \leftarrow D] \leq c$).
		\end{enumerate}
	\end{theorem}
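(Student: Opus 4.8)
The plan is to prove the two directions of the biconditional separately, with the technical heart being a minimality lemma: the greedy LQCI construction yields a distribution $D$ of minimum expected cost among all distributions satisfying conditions (1), (3), and (4) of Definition~\ref{def:improvising_dist}. Once this lemma is in hand, both directions follow relatively cleanly, since an improvising distribution is precisely a distribution satisfying (1), (3), (4) that additionally has cost at most $\cost$. Throughout I will use the decomposition of any distribution over $I$ into label marginals $p_i = \Pr[w \in I_i]$ and within-class conditionals $D_i$, under which the expected cost factors as $\sum_i p_i E_i$, where $E_i = E[\CF(w)\mid w \in I_i]$.

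For the necessity direction, suppose an improvising distribution $D^*$ exists. Conditions 1 and 2 follow from elementary probability-summation arguments. For condition 1, the randomness-over-labels constraint forces each of the $|\Omega|$ label marginals into $[\lambda,\rho]$; since they sum to $1$, we get $|\Omega|\lambda \le 1 \le |\Omega|\rho$, which rearranges to $\tfrac{1}{\rho} \le |\Omega| \le \tfrac{1}{\lambda}$. Condition 2 is obtained identically inside each label class: the conditional probabilities over the $|I_i|$ words of $I_i$ lie in $[\hat{\alpha}_i,\hat{\beta}_i]$ and sum to $1$, giving $\tfrac{1}{\hat{\beta}_i} \le |I_i| \le \tfrac{1}{\hat{\alpha}_i}$. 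Condition 3 is then immediate from the minimality lemma: the greedy distribution $D$ has cost no larger than that of $D^*$, which is at most $\cost$.

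For the sufficiency direction, assume conditions 1--3 hold. Conditions 1 and 2 are exactly the requirements (noted when describing the construction) under which the greedy cost and greedy label constructions are well-defined and produce genuine probability distributions; I would verify this by checking that the overflow quantities $o_i$ and $u$ are well-defined and that the total assigned probability in each phase equals $1$. The construction then yields each $D_i$ satisfying (1) and (4) and a label distribution $\hat{D}$ satisfying (3), so the composed $D$ (sample a label from $\hat{D}$, then a word from the corresponding $D_i$) satisfies all of (1), (3), (4). Condition 3 gives that $D$ has cost at most $\cost$, hence $D$ is an improvising distribution and $\CII$ is feasible.

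The main work is the minimality lemma, which I would prove in three steps. First, for a fixed label class $i$, the greedy cost construction minimizes the conditional expected cost over all distributions on $I_i$ satisfying (4); this is an exchange/majorization argument. Ordering the words of $I_i$ by increasing cost, the greedy distribution maximizes every cumulative prefix of probability mass (loading the cheapest words up to $\hat{\beta}_i$ and starving the dearest down to $\hat{\alpha}_i$), and by summation-by-parts a distribution that dominates another in all prefix sums has no greater expected cost when costs are increasing. Second, treating each class's minimal conditional cost $E_i^{\min}$ as a constant, the greedy label construction minimizes $\sum_i p_i E_i^{\min}$ over marginals $p_i \in [\lambda,\rho]$ summing to $1$, by the identical exchange argument one level up. Third, I would combine these via a separation observation: the constraints on the marginals $p_i$ (condition 3) and on the within-class conditionals (condition 4) are independent, and each $p_i \ge 0$, so $\min_{p,\{D_i\}} \sum_i p_i E_i(D_i) = \min_p \sum_i p_i \min_{D_i} E_i(D_i)$, showing that optimizing each phase greedily is globally optimal. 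The main obstacle is making the exchange argument fully rigorous in the presence of the single fractional ``overflow'' class, where mass is split between $\hat{\alpha}_i$ and $\hat{\beta}_i$; I expect to handle this by phrasing the argument purely in terms of cumulative prefix sums, which vary continuously and avoid special-casing the boundary class.
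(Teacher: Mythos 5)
Your proposal is correct and follows essentially the same route as the paper: the same summation arguments for conditions (1) and (2), the same verification that the greedy construction is well-defined and satisfies the non-cost constraints, and the same cost-minimality lemma (which the paper states as its Lemma~\ref{lemma:mincost} and proves via an informal transformation argument) to close the necessity direction. Your prefix-sum/exchange formulation and the explicit separation of the two optimization levels are just a more rigorous rendering of the paper's transformation argument, not a different approach.
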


We conclude this section with a reminder that the greedy LQCI construction is a \emph{construction} and not a practical algorithm: it defines a distribution but not a practical way to compute it for a specified LQCI instance.
With common specification formalisms such as DFAs and Boolean formulas, the number of possible improvisations can easily be exponential in the size of the problem instance.
In this case, assigning probabilities to words one at a time as described above in the abstract construction would be highly impractical. Instead, the algorithms we present in the following sections are able to avoid enumerating exponentially-large sets by working with implicit representations to create distributions equal to or approximating the one produced by the greedy LQCI construction.

\section{Exact LQCI for Automata Specifications}
\label{sec:exact-schemes}

The greedy LQCI construction from Sec.~\ref{sec:theory} gives us a way to determine if an LQCI instance is feasible and, if so, to build an improvising distribution.
Implementing the construction requires several operations---such as computing the size of the label/cost classes---which may or may not be tractable depending on the types of specification used in the instance.
In this section, we will identify a sufficient list of operations which yield an efficient generic improvisation scheme for any class of LQCI instances with specifications supporting these operations.
Then we will instantiate the scheme for two natural classes of specifications given by deterministic finite automata, obtaining efficient improvisation algorithms.

Following the description of the preceding section, we can see that for a given LQCI instance, the operations listed below are sufficient to complete the greedy LQCI construction and sample from the resulting distribution:

\begin{definition}{(Sufficient Operations)}
\label{def:sufficient_ops}
    Given an LQCI instance $\mathcal{C}$:
    \begin{enumerate}
        \item \label{op:compute_costs} Compute the list of possible costs $\Theta$. 
%        \item For each $i \in \{1, \hdots, |\Omega|\}$, compute $|I_i|$.
        \item \label{op:count_cost_class} For each $i \in \{1, \hdots, |\Omega|\}$ and $k \in \Theta$, compute $|I_{i,k}|$.
        \item \label{op:sample_cost_class} For each $i \in \{1, \hdots, |\Omega|\}$ and $k \in \Theta$, sample uniformly from $I_{i,k}$.
    \end{enumerate}
\end{definition}

If we can implement these operations in polynomial time, we can build a polynomial-time improvisation scheme in the sense of \cite{fremont:ci_original,fremont:ci_extended}, i.e., an algorithm which solves the LQCI problem in polynomial time, and whose generated improvisers themselves run in polynomial (expected) time. To do this we first compute the list of possible costs and the size of each $I_{i,k}$. We then perform a modified version of the greedy construction which assigns probabilities to entire cost classes instead of individual words. As each word in a class has the same label and cost, we can satisfy our cost and randomness requirements with a distribution that assigns the same probability to every word within a class.
Then to implement placing probability $p$ on each word of $I_{i,k}$ without enumerating this potentially exponentially-large set, we simply choose the set with probability $p |I_{i,k}|$ and then sample uniformly from it (see the Appendix for a detailed argument).

%However, efficiently computing the operations included in the exact scheme %might still be too restrictive for certain natural choices of formalism - %for example, counting the set of elements that satisfy a Boolean spec is of %course \#SAT-hard in general. In the next section, we will present %adjustments to make LQCI instances with formalisms that only permit %approximate counting and almost-uniform sampling tractable and quantify the %degree to which our formal guarantees must then be relaxed.

\begin{theorem}
\label{thm:exact_scheme}
Suppose for a class of LQCI instances the operations in Def.~\ref{def:sufficient_ops} can be performed in polynomial time (in the size of the instance). Then there is a polynomial-time improvisation scheme for that class.
\end{theorem}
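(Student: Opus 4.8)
The plan is to implement the greedy LQCI construction of Sec.~\ref{sec:theory} symbolically at the granularity of cost classes, exploiting the fact that the construction assigns equal probability to every word within a single cost class. Concretely, first I would run operations~\ref{op:compute_costs} and~\ref{op:count_cost_class} to obtain the list of possible costs $\Theta$ and the cardinalities $|I_{i,k}|$ for every label $i \in \{1,\dots,|\Omega|\}$ and cost $\theta_k \in \Theta$. Since operation~\ref{op:compute_costs} runs in polynomial time, its output $\Theta$ has polynomially many entries, and $|\Omega|$ is given as part of the instance; hence there are only polynomially many cost classes, and each count $|I_{i,k}|$ --- though possibly exponentially large as a number --- is represented with polynomially many bits. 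Summing these yields $|I_i| = \sum_k |I_{i,k}|$, which lets me check feasibility conditions (1) and (2) of Theorem~\ref{thm:lqci_greedy} directly, and also detect the degenerate cases ($\hat\alpha_i = \hat\beta_i$, etc.) flagged in the construction.

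Next I would carry out the greedy cost construction for each label class at the level of cost classes. The key observation is that, because every word in a cost class $I_{i,k}$ shares the same cost $\theta_k$, the construction assigns the same per-word probability $p_{i,k}$ to all words in that class: either $\hat\beta_i$ (for classes falling below the threshold $o_i$), the overflow value (for the single boundary class $\delta^i_r$), or $\hat\alpha_i$ (for the remaining classes). I would therefore sort the cost classes of each label by $\theta_k$, compute $o_i = \tfrac{1-\hat\alpha_i|I_i|}{\hat\beta_i-\hat\alpha_i}$, and walk through the sorted list accumulating word counts to determine, for each $(i,k)$, which of the three values $p_{i,k}$ takes. This produces a polynomial-size implicit representation of each distribution $D_i$. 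The expected cost of $D_i$ is then $\sum_k \theta_k\, p_{i,k}\, |I_{i,k}|$, a sum of polynomially many terms each of polynomial bit-size, so it is computed exactly in polynomial time. Feeding these expected costs into the greedy label construction --- itself a computation over the $|\Omega|$ labels --- gives the marginal distribution $\hat D$ and, by Theorem~\ref{thm:lqci_greedy}, the remaining feasibility test (condition (3)) together with the complete distribution $D$.

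To generate the improviser, I would emit an algorithm that samples from $D$ in two stages: sample a label $i$ from the explicit discrete distribution $\hat D$ over $|\Omega|$ labels, then sample from $D_i$ by first choosing a cost class $k$ with probability proportional to its total mass $p_{i,k}|I_{i,k}|$ and finally invoking operation~\ref{op:sample_cost_class} to draw a word uniformly from $I_{i,k}$. Since every word of $I_{i,k}$ receives per-word probability $p_{i,k}$, this two-stage procedure reproduces exactly the word-level distribution $D$ of the greedy construction, so its correctness is inherited from Theorem~\ref{thm:lqci_greedy}; and as each stage draws from an explicit distribution with polynomially many outcomes and operation~\ref{op:sample_cost_class} runs in polynomial time, the improviser has polynomial expected runtime. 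I expect the main obstacle to be the bookkeeping needed to justify that the class-level computation faithfully realizes the word-level greedy distribution: one must verify that the overflow class's total mass is split uniformly and that the implicit ``choose a class, then sample within it'' scheme induces exactly the intended per-word probabilities, including in the boundary and degenerate cases. The remaining concern --- that no intermediate rational blows up in bit-size --- is handled by noting that all quantities are obtained from the poly-bit inputs $\hat\alpha_i,\hat\beta_i,\lambda,\rho,c$, the counts $|I_{i,k}|$, and the costs $\theta_k$ via a fixed number of rational arithmetic operations per class.
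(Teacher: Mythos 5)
Your proposal is correct and follows essentially the same route as the paper's proof: run the sufficient operations to get $\Theta$ and the class sizes, execute the greedy construction at the granularity of cost classes (exploiting uniformity within each class), check the feasibility conditions of Theorem~\ref{thm:lqci_greedy}, and sample by selecting a cost class with its aggregate probability and then invoking uniform sampling within it. Your additional attention to bit-size of intermediate rationals and the degenerate cases is a detail the paper's proof passes over, but it does not constitute a different approach.
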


One broad class of specifications to which this scheme can apply is deterministic finite automata (DFAs): for example, we can encode the specifications from our robotic planning example as DFAs.
While a DFA can encode the hard specification $\HC$ directly, encoding cost and label functions is not as clear.
We consider two natural encodings: most simply, we can label each state of the DFA with an integer, assigning the associated label/cost to words ending at that state.

\begin{theorem}
\label{thm:output_dfa}
Consider the class of LQCI instances where $\HC$ is a DFA, $\CF$ and $\LF$ are given by DFAs which output an integer cost/label associated with the state they end on, the length bounds are given in unary and all other numerical parameters in binary.
This class has a polynomial-time improvisation scheme.
\end{theorem}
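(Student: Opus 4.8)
The plan is to reduce to Theorem~\ref{thm:exact_scheme}, so that it suffices to implement the three operations of Def.~\ref{def:sufficient_ops} in time polynomial in the instance size. The central object will be the product automaton $A = A_\HC \times A_\CF \times A_\LF$, whose states are triples $(q_H, q_C, q_L)$ and whose transitions advance the three component DFAs in lockstep; writing $Q_H, Q_C, Q_L$ for their state sets, $A$ has the polynomial size $|Q_H|\,|Q_C|\,|Q_L|$. Because the cost and label of a word are read off from the \emph{final} state of $A_\CF$ and $A_\LF$, a word $w$ with $m \le |w| \le n$ lies in the cost class $I_{i,k}$ exactly when the run of $A$ on $w$ ends in a triple whose $A_\HC$-component is accepting, whose $A_\CF$-component outputs $\theta_k$, and whose $A_\LF$-component outputs $\ell_i$. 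Thus each $I_{i,k}$ corresponds to a fixed ``accepting set'' $F_{i,k}$ of states of $A$ together with the length window $[m,n]$.

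For operation~(\ref{op:compute_costs}), since a word's cost is just the integer labelling the $A_\CF$-state it ends on, every element of $\Theta$ is one of the at most $|Q_C|$ distinct integers appearing as state outputs; I would take these as candidate costs and retain those actually achieved, which falls out of operation~(\ref{op:count_cost_class}). For operation~(\ref{op:count_cost_class}) I would run the standard length-indexed counting recurrence on $A$: with $N_\ell(s)$ the number of length-$\ell$ words whose run reaches $s$, we have $N_0(\mathrm{start})=1$ and $N_{\ell+1}(s') = \sum_{(s,a)\,:\,\delta(s,a)=s'} N_\ell(s)$, whence $|I_{i,k}| = \sum_{\ell=m}^{n} \sum_{s \in F_{i,k}} N_\ell(s)$; a single pass fills in the counts for all $|\Omega|\cdot|\Theta|$ classes at once. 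For operation~(\ref{op:sample_cost_class}) I would use the companion count-then-sample walk: first draw a length $\ell \in [m,n]$ with probability proportional to the number of length-$\ell$ words in $I_{i,k}$, then emit a uniform accepting path of that length by choosing, at each state with $r$ steps left, the next symbol with probability proportional to the count of length-$(r-1)$ completions through the successor state. Uniformity follows from the usual telescoping of these conditional probabilities.

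The step needing the most care is verifying that everything is genuinely polynomial under the stated encodings, which is where the hypotheses do their work. The integers $N_\ell(s)$ and $|I_{i,k}|$ may be as large as $|\Sigma|^n$, so they are exponential in \emph{value}; what rescues us is that their bit-length is only $O(n \log |\Sigma|)$, and since the length bounds are given in unary, $n$ is itself polynomial in the instance size. The recurrence therefore runs for polynomially many length-steps, each performing big-integer arithmetic on numbers of polynomial bit-length, and the subsequent comparisons against the probability and cost parameters (supplied in binary) are likewise polynomial. Were $n$ given in binary this bound would fail, so the unary length convention is essential; this encoding subtlety, rather than any combinatorial obstruction, is the crux of the argument.
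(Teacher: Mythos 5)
Your proposal is correct and follows essentially the same route as the paper's proof: reduce to Theorem~\ref{thm:exact_scheme} via the operations of Def.~\ref{def:sufficient_ops}, use a product DFA to recognize each cost class $I_{i,k}$, and apply the classical length-indexed counting/sampling recurrences (which the paper cites rather than spells out), with polynomiality resting on the unary encoding of the length bounds. The only cosmetic difference is that you build one global product with varying accepting sets $F_{i,k}$ whereas the paper builds a per-class product, and you make explicit the bit-length analysis of the counts that the paper leaves implicit.
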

\begin{proof}[Sketch]
Operation (\ref{op:compute_costs}) is trivial.
For (\ref{op:count_cost_class}) and (\ref{op:sample_cost_class}), we can easily construct DFAs accepting all improvisations with a given label and cost, then apply classical techniques for counting/sampling from the language of a DFA~\cite{hickey:uniform_random_gen_cfg}.
\qed
\end{proof}

To capture cost functions like path length or mission time (as in our planning example), we consider a second encoding using weighted DFAs: states are again labeled with integers, but the cost is now given by \emph{accumulating} costs from every state passed through.
Here, the number of possible costs can grow linearly with the largest cost of a single state, and so be exponential in the size of the (binary) encoding; as a result we only obtain a \emph{pseudopolynomial} improvisation scheme by applying Thm.~\ref{thm:exact_scheme}.
The algorithm can still be feasible, however, when the magnitude of possible costs is not too large, as we will see in Sec.~\ref{sec:experiments}.

\begin{theorem}
\label{thm:acc_dfa}
Consider the class of LQCI instances as in Thm.~\ref{thm:output_dfa} but where $\CF$ is given by a weighted DFA, i.e. summing the integer costs associated with each state of a DFA accepting path (with multiplicity). 
This class has a pseudopolynomial improvisation scheme.
\end{theorem}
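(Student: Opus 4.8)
The plan is to reduce to Theorem~\ref{thm:exact_scheme} by showing that, for this class, the three sufficient operations of Def.~\ref{def:sufficient_ops} can all be carried out in \emph{pseudopolynomial} time, and then to observe that when those operations run in pseudopolynomial time the generic scheme of Theorem~\ref{thm:exact_scheme} likewise yields a pseudopolynomial improvisation scheme (since the number of cost classes $|\Omega|\cdot|\Theta|$ is itself only pseudopolynomial). The one new ingredient relative to Theorem~\ref{thm:output_dfa} is that the cost of a word is now \emph{accumulated} along the run of the weighted DFA rather than read off its final state, so we cannot simply intersect $\HC$ with a fixed cost/label condition; instead we must track the running cost inside the automaton state.

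Concretely, I would first form the product DFA $P$ of the hard automaton $\HC$, the label automaton $\LF$, and the (unweighted transition structure of the) weighted cost automaton $\CF$; a state of $P$ is a triple recording the current states of all three components, and $P$ reads each input symbol in all three simultaneously. I then build a \emph{cost-augmented} automaton $P'$ whose states are pairs $(q,\gamma)$, where $q$ is a state of $P$ and $\gamma$ is the cost accumulated so far. A transition on symbol $a$ sends $(q,\gamma)$ to $(\delta_P(q,a),\gamma+\mu)$, where $\mu$ is the integer weight of the new cost-component state. Since each weight has magnitude at most some $\mu_{\max}$ and words have length at most $n$, the reachable values of $\gamma$ lie in a range of size $O(n\,\mu_{\max})$, so $P'$ has pseudopolynomially many states. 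Crucially $P'$ is still deterministic, and every word $w\in\Sigma^{m:n}$ has a unique run in $P'$ ending in a state $(q,\gamma)$ with $\gamma=\CF(w)$; such a $w$ lies in the cost class $I_{i,k}$ exactly when its run ends in a state whose $q$ is $\HC$-accepting, whose $\LF$-component outputs $\ell_i$, and whose accumulated cost $\gamma$ equals $\theta_k$.

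With $P'$ in hand the three operations follow as in Theorem~\ref{thm:output_dfa}, now applied to the pseudopolynomially-larger automaton. For operation~(\ref{op:compute_costs}), a single length-bounded reachability pass over $P'$ (a layered dynamic program indexed by length step and augmented state) determines which accumulated costs are attainable by some improvisation, yielding $\Theta$. For operations~(\ref{op:count_cost_class}) and~(\ref{op:sample_cost_class}), for each pair $(\ell_i,\theta_k)$ I designate as accepting exactly the augmented states matching that label and cost, intersect with the length bounds $[m,n]$, and invoke the classical DFA counting and uniform-sampling techniques of~\cite{hickey:uniform_random_gen_cfg} to obtain $|I_{i,k}|$ and to draw uniform samples from $I_{i,k}$; uniformity over runs is uniformity over words precisely because $P'$ is deterministic. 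Each such step is polynomial in the size of $P'$ and hence pseudopolynomial in the instance.

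The main obstacle, and the reason the result is only pseudopolynomial rather than polynomial, is exactly this blow-up of the cost range: the accumulated cost can take on the order of $n\,\mu_{\max}$ distinct values, which is exponential in the binary encoding of the state weights, so both $P'$ and the number of cost classes grow pseudopolynomially. I would therefore close by confirming that the generic construction behind Theorem~\ref{thm:exact_scheme} degrades gracefully: its running time is polynomial in $|\Omega|$, $|\Theta|$, and the cost of performing the three operations, all of which remain pseudopolynomial here, so both the resulting improviser and the scheme that produces it run in pseudopolynomial time.
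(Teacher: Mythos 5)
Your proposal is correct and follows essentially the same route as the paper's proof: both track the accumulated cost inside the automaton state (the paper builds, for each cost $\theta_k$, a product of the hard DFA, the label-restricted DFA, and a cost DFA with $k+1$ copies of each state, whereas you build a single cost-augmented product and vary the accepting set), both compute $\Theta$ by a length-indexed dynamic program, and both then invoke the counting/sampling machinery behind Thm.~\ref{thm:output_dfa} to supply the operations of Def.~\ref{def:sufficient_ops}, with the pseudopolynomial bound coming from the $O(n\,\mu_{\max})$ cost range. The only cosmetic difference is that the paper phrases the final accounting as ``unary costs give a polynomial scheme, so binary costs give a pseudopolynomial one,'' while you argue the pseudopolynomial bounds directly.
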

\begin{proof}[Sketch]
We can perform operation (\ref{op:compute_costs}) by dynamic programming over the states and word lengths up to the length bound $n$.
If the maximum cost of a state in the DFA for $\CF$ is $M$, then the cost of an improvisation is at most $M(n+1)$; so for (\ref{op:count_cost_class}) and (\ref{op:sample_cost_class}) we can build DFAs of size $\textit{poly}(M,n)$ recognizing $I_{i,k}$ and then apply counting/sampling as above.
If state costs were encoded in unary, the operations above would take polynomial time and Thm.~\ref{thm:exact_scheme} would apply.
Converting from binary to unary yields a pseudopolynomial scheme.
\qed
\end{proof}

\section{Approximate LQCI for Symbolic Specifications}
\label{sec:approx_lqci_alg}

The LQCI algorithms for DFAs that we developed in the previous section cover many useful specifications; however, as we will see in Sec.~\ref{sec:experiments}, even fairly simple specifications can require very large automata when represented explicitly.
In this section we propose an algorithm that avoids such blowup by working with \emph{symbolic} specifications given by Boolean formulas.
We cannot use our scheme of Thm.~\ref{thm:exact_scheme} directly, because counting the number of solutions of a Boolean formula is \textsf{\#P}-hard.
Nevertheless, we will show that by leveraging recent advances in SAT solving, we can \emph{approximately} solve LQCI to any desired accuracy.

We consider LQCI instances with specifications given by Boolean formulas, whose variables encode traces and costs; for modeling convenience, we also allow a vector of auxiliary variables $z$.
Specifically, we assume we are given:
\begin{itemize}
    \item a conjunctive normal form (CNF) formula $h(x, z)$ such that $\exists z. h(x, z)$ holds if and only if the bitvector $x$ encodes a trace satisfying the hard constraint;
    \item a CNF formula $\ell(x,y,z)$ such that $\exists z . \ell(x,y,z)$ holds if and only if trace $x$ has the label encoded by the bitvector $y$;
    \item a CNF formula $k(x,y,z)$ such that $\exists z . k(x,y,z)$ holds iff trace $x$ has cost $y$ (a positive integer).
\end{itemize}
We further assume that the instance has only a polynomial number of labels, although there can be exponentially-many costs.

Given such an instance, we can readily build a CNF formula $\phi_i (x,y,z)$ which is satisfiable iff $x$ encodes a word which has length between $m$ and $n$, satisfies the hard constraint, belongs to label $i$, and has cost $y$. The solutions $x$ for a particular choice of $i$ and $y$ comprise the associated cost class, so that the operations we need for the greedy construction are instances of the \emph{model counting} and \emph{uniform generation} problems for SAT.\footnote{Since we do not want to count over the auxiliary variables $z$, we actually require \emph{projected} counting/sampling, which the algorithms we use can also perform~\cite{unigen_14,fremont:ci_extended}.}
Recent work has yielded practical algorithms based on SAT solvers which solve these problems approximately~\cite{unigen_14,approxmc_unigen_20}\footnote{\label{note:unigen_proviso}We note that \texttt{UniGen}~\cite{unigen_14,unigen_15} is not strictly speaking an almost-uniform generator as in Def.~\ref{def:approx_counter} since it only supports sufficiently-large tolerances; for theoretical results, one can substitute the algorithm of \cite{bgp_sampling} to do \emph{exact} (projected) uniform sampling.}:

\begin{definition}{(\cite{unigen_14})}
\label{def:approx_counter}
An \emph{approximate counter} is a probabilistic algorithm $\mathcal{C}$ which given a CNF formula $F$ with set of solutions $R_F$, a tolerance $\tau > 0$, and a confidence $1-\delta \in [0,1)$ guarantees that \[ \Pr \left[ |R_F|/(1+\tau) \leq \mathcal{C}(F,\tau,1-\delta) \leq (1+\tau)|R_F| \right] \geq 1 - \delta . \]
An \emph{almost-uniform generator} $\mathcal{G}$ is a probabilistic algorithm that, given $F$ as above and a tolerance $\epsilon > 0$, guarantees that for every $y \in R_F$, we have \[ 1/((1+\epsilon)|R_F|) \leq \Pr[\mathcal{G}(F,\epsilon) = y] \leq (1+\epsilon)/|R_F| .\]
\end{definition}

We can modify our greedy construction to work with only approximate counting/sampling as follows.
If the cost bitvector has $|y|$ bits, the cost of a word is between $1$ and $2^{|y|}$.
To avoid enumerating exponentially-many cost classes for label $i$, we group words into ``cost buckets'' by subdividing this interval into powers of $r$ for some $r>1$, i.e. $[1,r), [r, r^2), \dots, [r^{b-1}, r^b)$.
We will have $b = O(\log_r (2^{|y|})) = O(|y| / \log r)$ buckets, and we can estimate the size of bucket $j$ by approximately counting solutions to $\exists z . [ \phi_i(x,y,z) \land (r^j \leq y < r^{j+1})]$.
We will then use these estimates to choose a distribution over buckets, following the intuition of the greedy cost construction that we should assign the most probability to buckets with lowest estimated cost, but with some adjustments to bound the error that approximate sampling introduces.

% aiming to replicate the greedy construction's goal of creating a minimal cost distribution by assigning the most probability possible to the buckets with lowest average cost. Since there are a linear number of buckets in the presentation of $y$ and approximate counting can be performed efficiently, the procedure becomes tractable. After assigning probabilities to each cost bucket, almost-uniform sampling allows an approximate version of the distribution to be realized. Of course these approximations introduce error, which will be quantified in the algorithm's analysis.

For each label class $i$ with randomness parameters $\alpha$ and $\beta$, we apply a modified form of the greedy cost construction, shown in Algorithm~\ref{alg:approx_greedy_cost}. 
We start in lines 1-3 by using model counting as above (with a tolerance $\tau$ and confidence $1-\delta$ to be specified later) to find estimates $c_k$ of the size of each bucket $k$, and corresponding lower bounds $p_k$ on how much probability the bucket would have received in the exact greedy construction (the extra $1+\tau$ factor accounting for possibly overestimating the size of the bucket).
If these lower bounds total more than 1, then we know there are too many improvisations for the instance to be feasible (assuming the model counts are within their tolerance) and we return false on line 4.
Otherwise, on lines 5-7 we proceed as in the greedy construction, starting from the cheapest bucket, increasing the assigned probability per word to $(1+\tau)\beta$ until a probability of 1 is reached.
The factor of $1+\tau$ ensures that, even if the model counts have underestimated the size of the cheaper buckets, we still assign them at least as much probability as the exact greedy construction would.
Next, line 8 checks if there are too few improvisations, similarly to line 4.
Finally, we return our distribution over buckets, as well as a lower bound on its expected cost that we will use next.

\begin{algorithm}[tb]
\caption{ApproximateGreedyCost($i,\alpha,\beta,r,b,\tau,\delta$)}
\label{alg:approx_greedy_cost}
\begin{algorithmic}[1]
	\For{$k = 1 \text{ to } b$}
		\State $c_k := \#SAT(\exists z. \phi_i(x, y, z) \land (r^{k-1} \leq y < r^k), \tau, 1 - \delta)$
		\State $p_k := \alpha c_k / (1+\tau)$
	\EndFor
	\lIf{$\sum_{j=1}^{b} p_j > 1$} $\textbf{return} \text{ False}$
	\For{$k = 1 \text{ to } b$}
		\State $p_k := \min((1+\tau)\beta c_k, 1 - \sum_{j\neq k} p_j)$
		\lIf{$\sum_{j=1}^{b} p_j = 1$} $\textbf{break}$
	\EndFor
	\lIf{$\sum_{j=1}^{b} p_j < 1$} $\textbf{return} \text{ False}$
	\State $Lo := \sum_{j=1}^{b} p_j r^{j-1}$
	\State $\textbf{return } \{p_j\}_{j=1}^{b}, Lo$ \label{line:agc_feasible}
\end{algorithmic}
\end{algorithm}

If Algorithm~\ref{alg:approx_greedy_cost} does not return false for any label class, then we complete our approximate LQCI algorithm by running the greedy label construction from Sec.~\ref{sec:theory}, using the lower bounds from Algorithm 1 as the expected cost of each label class.
As before, we declare the instance infeasible if the construction fails or if its expected cost exceeds the cost bound $c$.
Otherwise, we obtain a distribution over all the cost buckets; our improviser then simply chooses a bucket from this distribution and applies almost-uniform sampling to sample a word from it.

Choosing the bucket count and counting/sampling tolerances appropriately, our algorithm can approximate an improvising distribution to within arbitrarily-small multiplicative error, using polynomially-many calls to a SAT solver:

\begin{theorem}
\label{thm:approx_lqci}
There is an algorithm which, given a Boolean LQCI instance $\mathcal{C}$, a cost tolerance $\zeta > 0$, a randomness tolerance $\gamma > 0$, and a confidence $1-\delta \in [0, 1)$, runs in $\textit{poly}(|\mathcal{C}|, 1/\zeta, 1/\gamma, \log(1/\delta))$ time relative to an \textsf{NP} oracle and either returns $\bot$ or an algorithm sampling from a distribution $\widetilde{D}$ over words.
With probability at least $1-\delta$, if $\bot$ is returned then $\mathcal{C}$ is infeasible, and otherwise:
\begin{enumerate}
	\item \textbf{Hard Constraint:} $ \displaystyle \Pr[\HC(w) \mid w \leftarrow \widetilde{D}] = 1$
			
	\item \textbf{Cost Constraint:} $ \displaystyle E[\CF(w) \mid w \leftarrow \widetilde{D}] \leq (1+\zeta) \cost$ 
		
	\item \textbf{Randomness over Labels:} $ \displaystyle \forall i \in \{1, \hdots, |\Omega| \}, \ \lambda \leq \Pr{[w \in I_i \mid w \leftarrow \widetilde{D}]} \leq \rho $
			
	\item \textbf{Randomness over Words:} $ \displaystyle \forall i \in \{1, \hdots, |\Omega| \} \ \forall y \in I_i, \newline  \hat{\alpha}_i/(1+\gamma) \leq \Pr[y = w \mid w \in I_i, w \leftarrow \widetilde{D}] \leq (1+\gamma) \hat{\beta}_i$
\end{enumerate}
\end{theorem}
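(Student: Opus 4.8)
The plan is to analyze the algorithm sketched just before the theorem: for each label $i$ run Algorithm~\ref{alg:approx_greedy_cost} with parameters $\hat\alpha_i,\hat\beta_i$ to get a distribution over cost buckets, then run the greedy label construction of Sec.~\ref{sec:theory} using the returned lower bounds $Lo_i$ as label costs, and finally sample by drawing a bucket and then an almost-uniform word from it. I would fix the bucket ratio $r=1+\zeta$, pick counting/sampling tolerances $\tau,\epsilon$ with $(1+\tau)^2(1+\epsilon)\le 1+\gamma$ (e.g. $\tau=\epsilon=\gamma/4$), and set the confidence of each counting query to $1-\delta'$ with $\delta'=\delta/N$, where $N=O(|\Omega|\,b)$ is the total number of queries and $b=O(|y|/\log r)$. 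The whole argument proceeds by conditioning on the good event $G$ that every counting query returns a value within its tolerance $(1+\tau)$; a union bound gives $\Pr[G]\ge 1-\delta$, and on $G$ I will show all guarantees hold deterministically (the almost-uniform generator's tolerance from Def.~\ref{def:approx_counter} is a property of its output distribution, so it needs no separate union bound).

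Two of the four constraints are immediate on $G$. The \textbf{hard constraint} holds with probability $1$ since the sampler only returns solutions of $\phi_i$, which encode length-bounded words satisfying $\HC$. \textbf{Randomness over labels} holds \emph{exactly}: the greedy label construction assigns each label one of the values $\rho$, $\lambda$, or $1-\rho u-\lambda(|\Omega|-u-1)$, all in $[\lambda,\rho]$ by the same computation as in Thm~\ref{thm:lqci_greedy}, independently of the cost estimates, which affect only the \emph{order} in which labels are processed. For \textbf{randomness over words}, I track the conditional probability of a fixed word $w$ in bucket $k$ of label $i$, namely $p_k\cdot\Pr[\mathcal G=w\mid\text{bucket }k]$ with sampling factor in $[1/((1+\epsilon)N_k),(1+\epsilon)/N_k]$ for the true bucket size $N_k$. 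Substituting the possible values of $p_k$ (namely $(1+\tau)\hat\beta_i c_k$ on filled buckets, $\hat\alpha_i c_k/(1+\tau)$ on minimal buckets, and an intermediate value on the overflow bucket) together with the bound $c_k\in[N_k/(1+\tau),(1+\tau)N_k]$ from $G$, the per-word probability falls in $[\hat\alpha_i/((1+\tau)^2(1+\epsilon)),\,(1+\tau)^2(1+\epsilon)\hat\beta_i]\subseteq[\hat\alpha_i/(1+\gamma),(1+\gamma)\hat\beta_i]$ in every case.

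The \textbf{cost constraint} is the crux. First, since every word in bucket $j$ has cost in $[r^{j-1},r^j)$, the true expected cost of the per-label distribution $D_i$ is at most $r\sum_j p_j r^{j-1}=r\cdot Lo_i$, so bucketing costs only a factor $r$ and is insensitive to $\epsilon$. Second, I must show the algorithm never under-allocates to cheap words relative to the exact greedy construction. Because buckets are cost-ordered, the exact greedy cost construction fills entire cheaper buckets before any more expensive one, so its per-bucket masses equal those of a bucketed greedy with masses $p_j^{\ast}$, and the true exact per-label cost $\mathrm{OPT}_i\ge\sum_j p_j^{\ast}r^{j-1}$. The $1+\tau$ factors in Algorithm~\ref{alg:approx_greedy_cost} are exactly what guarantees $p_k^{\mathrm{approx}}=(1+\tau)\hat\beta_i c_k\ge\hat\beta_i N_k=p_k^{\ast}$ on each filled bucket, so a water-filling/stochastic-dominance argument gives $Lo_i=\sum_j p_j^{\mathrm{approx}}r^{j-1}\le\sum_j p_j^{\ast}r^{j-1}\le\mathrm{OPT}_i$. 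Finally, the greedy label construction minimizes $\sum_i\hat D(i)\,(\text{label cost}_i)$ over valid marginals \emph{for any} cost vector; applying this to $(Lo_i)_i$ and comparing with the exact label distribution $\hat D^{\ast}$ yields $\sum_i\hat D(i)Lo_i\le\sum_i\hat D^{\ast}(i)Lo_i\le\sum_i\hat D^{\ast}(i)\mathrm{OPT}_i\le\cost$ whenever the instance is feasible. Hence the true expected cost of $\widetilde D$ is at most $r\sum_i\hat D(i)Lo_i\le r\cost=(1+\zeta)\cost$. The same inequalities, read contrapositively, give soundness of $\bot$: the conservative $1+\tau$ factors make the tests on lines 4 and 8 fire only when $\hat\alpha_i|I_i|>1$ or $\hat\beta_i|I_i|<1$ (violating condition~(2) of Thm~\ref{thm:lqci_greedy}), and $\sum_i\hat D(i)Lo_i>\cost$ forces the exact greedy cost above $\cost$ (violating condition~(3)).

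For complexity, there are $N=O(|\Omega|\,b)$ counting queries with $b=O(|y|/\log(1+\zeta))=O(|\CII|/\zeta)$, each running in time polynomial in $|\CII|$, $1/\tau=O(1/\gamma)$, and $\log(1/\delta')=O(\log(1/\delta)+\log|\CII|+\log(1/\zeta))$ relative to an \textsf{NP} oracle; since $|\Omega|$ is polynomial by assumption, the total is $\textit{poly}(|\CII|,1/\zeta,1/\gamma,\log(1/\delta))$, and the generated sampler makes a single sampling call per output. The main obstacle is the cost analysis of the previous paragraph: it is where the three distinct error sources---bucketing granularity $r$, counting tolerance $\tau$, and the use of the lower bounds $Lo_i$ in the label stage---must be reconciled, and where one must verify that the deliberately conservative $1+\tau$ factors make the approximate construction \emph{dominate} rather than undershoot the exact greedy one, so that no error accumulates in the wrong direction.
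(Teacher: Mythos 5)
Your proposal is correct and follows essentially the same route as the paper's proof: the same two-stage decomposition (a per-label lemma showing Algorithm~\ref{alg:approx_greedy_cost}'s output is bracketed as $Lo_i \leq E[\CF] \leq r\,Lo_i$ and dominates the exact greedy allocation thanks to the conservative $1+\tau$ factors, then a label-stage lemma transferring optimality of the greedy label construction to the $Lo_i$ vector), the same error composition $(1+\tau)^2(1+\epsilon) \leq 1+\gamma$ with $r = 1+\zeta$, the same three-case infeasibility soundness, and the same complexity accounting; your union bound $\delta' = \delta/N$ is just a cosmetic variant of the paper's $(1-d)^{|\Omega|b} \geq 1-\delta$, and your ``water-filling/stochastic-dominance'' step is exactly the initial-segment rearrangement argument the paper carries out explicitly in its Lemma~\ref{lemma:agca}.
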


\section{Maximum-Entropy LQCI}
\label{sec:maxent}

Our LQCI definition requires providing conditional probability bounds for every label, which while allowing maximal control of the distribution, can be unwieldy to use.
However, if we drop conditional bounds entirely, trivial solutions with unnecessarily-poor randomness can appear.
For example, consider an LQCI instance with parameters $\lambda = 0.5$, $\rho = 0.5$, $\hat{\alpha} = (0,\hdots,0)$, $\hat{\beta} = (1,\hdots,1)$. With this choice, any distribution will satisfy the randomness over words constraint, and all labels have the same marginal probability of being selected. Then assume that we have two labels, costs $\Theta = (1,2)$, and cost bound $c = 1.5$, along with the following cost class sizes: $|I_{1,1}| = 1$, $|I_{2,1}| = 1$, $|I_{1,2}| = 1000$, $|I_{2,2}| = 1000$.
Now simply assigning 50\% probability to $I_{1,1}$ and 50\% probability to $I_{2,1}$ is an improvising distribution.
Assigning 25\% probability to all 4 classes is also an improvising distribution, and clearly preferable from the perspective of randomness.
Unfortunately, without a nontrivial randomness over words constraint, we have no way to push the improviser to select the second distribution. To enforce this, we introduce the concept of entropy from information theory.

    \begin{definition}
    Given a discrete random variable $X$ with a set of outcomes $\Omega$ and probabilities $p: \Omega \rightarrow [0,1]$, the \emph{entropy} of $X$ is $H(X) = - \sum_{x \in \Omega} p(x) \lg p(x)$.
    \end{definition}

To obtain a problem formulation that maximizes randomness without requiring probability bounds for each class, we invoke the Principle of Maximum Entropy: amongst all improvising distributions (without a randomness over words constraint), we should select the one with the highest entropy (as first proposed for reactive CI in \cite{vazquezchanlatte:maximum_entropy_ci}). This yields a notion of Maximum-Entropy LQCI:

	\begin{definition}
	A \emph{Maximum-Entropy LQCI (MELQCI) instance} is an LQCI instance where $\hat{\alpha} = (0,\hdots,0)$ and $\hat{\beta} = (1,\hdots,1)$.
	A \emph{$\tau$-improviser} for a MELQCI instance $\mathcal{C}$ is an improviser (as in LQCI) whose output distribution has entropy at most $\tau$ less than the maximum-entropy improvising distribution for $\mathcal{C}$.
	We define the \emph{MELQCI problem} as, given an instance $\mathcal{C}$ and $\tau > 0$, determining if $\CII$ is feasible, and, if so, generating a $\tau$-improviser for $\CII$.
	\end{definition}

	We can solve MELQCI efficiently in the same cases as LQCI:
	
	\begin{theorem}
	    \label{thm:melqci_algorithm}
	    Given a class of MELQCI instances for which one can perform the operations in Def.~\ref{def:sufficient_ops} in polynomial time, there is a polynomial-time algorithm which given an instance from the class and a $\tau > 0$, computes a $\tau$-improviser.
	\end{theorem}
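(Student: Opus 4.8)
The plan is to reduce the search for a maximum-entropy improvising distribution to a finite-dimensional \emph{convex optimization} problem over the cost classes, solve it approximately, and then sample from the resulting distribution using the operations of Def.~\ref{def:sufficient_ops}. First I would observe that, because the randomness-over-words bounds are trivial in a MELQCI instance ($\hat{\alpha}_i = 0$, $\hat{\beta}_i = 1$), the only remaining constraints (hard, cost, and randomness over labels) depend on a distribution $D$ only through the total probability mass it places on each cost class $I_{i,k}$. Writing $q_{i,k} = \Pr[w \in I_{i,k} \mid w \leftarrow D]$ and $N_{i,k} = |I_{i,k}|$, a standard property of entropy shows that for any fixed vector of masses $(q_{i,k})$, the entropy $H(D)$ is maximized by spreading each mass $q_{i,k}$ uniformly over the $N_{i,k}$ words of its cost class. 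Hence it suffices to optimize over the masses, and the maximum achievable entropy is the concave function
\[ H(q) = \sum_{i,k} \bigl( q_{i,k}\lg N_{i,k} - q_{i,k}\lg q_{i,k} \bigr) \]
(with the convention $0\lg 0 = 0$ and omitting empty classes $N_{i,k}=0$). The constraints become linear in $q$: $q_{i,k}\ge 0$, $\sum_{i,k} q_{i,k} = 1$, the cost bound $\sum_{i,k} q_{i,k}\,\theta_k \le \cost$, and the label bounds $\lambda \le \sum_k q_{i,k} \le \rho$ for each $i$. By operations (\ref{op:compute_costs}) and (\ref{op:count_cost_class}) I can enumerate the polynomially-many cost classes and compute each $N_{i,k}$ in polynomial time, so this convex program has polynomial size and each $\lg N_{i,k}$ has polynomially-many bits.

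Next I would dispose of feasibility: the MELQCI instance is feasible exactly when this polytope of mass vectors is nonempty, which is a linear feasibility question (equivalently, the specialization of the conditions of Thm.~\ref{thm:lqci_greedy} to $\hat{\alpha}_i=0,\hat{\beta}_i=1$) and can be decided in polynomial time. If the polytope is empty, I report infeasibility. For a feasible instance, I would maximize the concave objective $H(q)$ over the polytope to within additive error $\tau$ using a polynomial-time convex-optimization method (e.g.\ the ellipsoid method with a separation oracle for the explicit linear constraints). Since $H$ is bounded on the simplex by $\lg|I|$, which has polynomially-many bits, this runs in time polynomial in $|\CII|$ and $\log(1/\tau)$ and returns a feasible mass vector $\tilde q$ with $H(\tilde q) \ge H^{*} - \tau$, where $H^{*}$ is the optimum.

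The desired $\tau$-improviser then samples a cost class $I_{i,k}$ with probability $\tilde q_{i,k}$ and returns a uniform sample from it via operation (\ref{op:sample_cost_class}). This runs in polynomial expected time; it satisfies the hard, cost, and label constraints of Def.~\ref{def:improvising_dist} because $\tilde q$ is feasible; and by the reduction above its entropy is exactly $H(\tilde q) \ge H^{*} - \tau$, i.e.\ within $\tau$ of the maximum-entropy improvising distribution, as required.

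The main obstacle is the convex-optimization step, for two reasons. First, the gradient of $H$ blows up as any $q_{i,k} \to 0$, so the objective is not Lipschitz on the whole polytope; I would handle this by optimizing the (finite, concave, continuous) objective \emph{value} directly rather than relying on bounded gradients, or by restricting to coordinates bounded below by a suitably small quantity and arguing the induced entropy loss is at most $\tau$. Second, I must return a \emph{genuinely} feasible $\tilde q$ so that the output is a true improviser and not merely near-feasible; since the constraints are exact linear inequalities, I would use an optimization routine that maintains feasibility, and if a returned point is marginally infeasible, project it back into the polytope at negligible entropy cost using the continuity of $H$.
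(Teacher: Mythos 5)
Your proposal matches the paper's proof essentially step for step: the paper likewise reduces MELQCI to maximizing the separable concave entropy objective $\sum_{i,k} \left( q_{i,k}\lg N_{i,k} - q_{i,k}\lg q_{i,k} \right)$ over cost-class masses subject to the same linear constraints (cost bound, label bounds, nonnegativity, normalization, and zero mass on empty classes), then samples a cost class and draws uniformly within it via operation (3). The only substantive difference is the solver: the paper invokes Chubanov's polynomial-time algorithm for separable convex objectives with linear constraints, initialized with the greedy LQCI construction as a feasible point, which sidesteps the ellipsoid-method concerns about gradient blow-up and exact feasibility that you correctly flag and patch by hand.
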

	\begin{proof}(Sketch).
	    Once cost class sizes have been computed as in Thm.~\ref{thm:exact_scheme}, the search for the desired distribution over cost classes can be formulated as an optimization problem with a separable convex objective (the entropy of the distribution) and linear constraints (improviser constraints). This problem can be solved in time polynomial in the size of the instance and $\log(1/\tau)$~\cite{chubanov:convex_opt_poly_alg}.
	\end{proof}

    As in Sec.~\ref{sec:exact-schemes}, we can transform this algorithm into a \emph{pseudopolynomial} scheme for accumulated-cost DFA specifications.

\section{Experiments}
\label{sec:experiments}

We ran several experiments on the robotic planning problems from Sec.~\ref{sec:prob_def} (code available at \cite{artifact}).
These experiments aim to demonstrate that we can encode practical problems as LQCI instances solvable using our algorithms, highlight the relative advantages/disadvantages of our exact/approximate algorithms, and show the necessity of the label function in ensuring meaningful randomness.

As a minimal experiment, we used a 6x6 grid world with a small range of costs (0--3 per cell, 8--39 for paths); we compared against a 7x7 grid world with a much larger range of costs (0--9 per cell, 38--137 for paths).\footnote{A larger 8x8 map exceeded our 24-hour wallclock timeout for all exact and approximate experiments.}
We encoded the specifications in Sec.~\ref{sec:prob_def} both as DFAs for our exact LQCI and MELQCI algorithms, and as Boolean formulas for our approximate LQCI algorithm. 
The Boolean encodings were obtained by formulating the specifications in the SMT theory of bitvectors, and bit-blasting them with \texttt{Z3}~\cite{z3_paper}; the resulting formulas had several thousand variables and tens of thousands of clauses.
%For both map sizes creating the approximate encoding consistently took around 90 seconds. 
We used \texttt{UniGen3}~\cite{approxmc_unigen_20,unigen_14} for uniform generation with its default tolerance\footnote{\texttt{UniGen3} cannot guarantee a multiplicative error of less than 7.48 \cite{unigen_15}; see footnote~\ref{note:unigen_proviso}.} of 17, and an in-development version of \texttt{ApproxMC}~\cite{approxmc_unigen_20,approxmc_16,arjun} for approximate model counting with tolerances of 1.4, 6.7, and 23.25, so that the overall $\gamma$ values were $10^2$, $10^3$, and $10^4$. To put these values into context, the small/large maps had on the order of $10^7$/$10^9$ improvisations, and we required that no word have $> \rho = 10^{-5}$ probability of being selected. Therefore, with our tightest/loosest $\gamma$ we are guaranteed that no word will be more than 0.1\%/10\% of the distribution respectively.
The confidence was set to 0.8 ($\delta = 0.2$), ApproxMC's default confidence. Each model counting call however required a much higher confidence to achieve an overall $\delta$ of 0.2.

For the small/large maps respectively we used length bounds of (1,25)/(1,30) and cost bounds of 30/50.
We used label probability bounds of $(0.3,0.4)$ throughout, except for unlabeled ``QCI'' experiments.
The experiments were run on a 64-core machine with 188 GB of RAM; we used 62 parallel threads, unless this exhausted RAM, in which case we used 16 threads.
The experiments are summarized in Tab.~\ref{tab:experiments}; due to significant runtime variability for the approximate experiments, we report means and standard deviations over 10 repetitions.
For all exact experiments which completed within the 24-hour wallclock timeout, RAM usage was $\le 6$ GB per thread, and the average time to sample an improvisation was $\le 1$ ms; all approximate experiments required $\le 250$ MB RAM per thread and took $\sim 20$ s to sample an improvisation.

\begin{table}[tb!]
\caption{Experiment parameters and improviser construction times (in minutes).}
%\vspace{0.05in}
%\begin{subtable}{\textwidth}
\centering
{
\setlength{\tabcolsep}{3pt}
\begin{tabular}{ccccccccc}
\toprule
Map & Problem Type & $(\lambda, \rho)$ & $(\hat{\alpha}_i, \hat{\beta}_i)$ & $r$ & $\gamma$ & $\delta$ & Wall Time & CPU Time \\
\midrule
\multirow{6}{*}{6x6} & Exact QCI & (0, 3e-5) & (1, 1) & \multicolumn{3}{c}{\multirow{3}{*}{N/A}} &  540 & 5568 \\
 & Exact LQCI & (0, 1e-5) & (0.3, 0.4) & \multicolumn{3}{c}{} &  444 & 6102 \\
  & Exact MELQCI & N/A & (0.3, 0.4) & \multicolumn{3}{c}{} &  444 \tablefootnote{The LQCI/MELQCI runtimes were nearly identical, since MELQCI reuses the LQCI computations and adds a convex optimization step, which took negligible time.} & 6102 \\ \cmidrule{2-9} 
  & Approx. LQCI & (0, 1e-5) & (0.3, 0.4) & 1.2 & $10^2$ & 0.2 & $23.7 \pm 0.6$ & $93.3  \pm 1.4$ \\ 
  & Approx. LQCI & (0, 1e-5) & (0.3, 0.4) & 1.2 & $10^3$ & 0.2 & $21.2 \pm 0.7$ & $81.5  \pm 1.1$ \\
  & Approx. LQCI & (0, 1e-5) & (0.3, 0.4) & 1.2 & $10^4$ & 0.2 & $20.2 \pm 0.7$ & $78.4  \pm 3.4$  \\
\midrule
 \multirow{6}{*}{7x7} & Exact QCI & (0, 3e-5) & (1, 1) & \multicolumn{3}{c}{\multirow{3}{*}{N/A}} & \multicolumn{2}{c}{\multirowcell{3}{Timed out\\(24-hour wall time)}} \\
  & Exact LQCI & (0, 1e-5) & (0.3, 0.4) & \multicolumn{3}{l}{} & \multicolumn{2}{c}{} \\
  & Exact MELQCI & N/A & (0.3, 0.4) & \multicolumn{3}{l}{} & \multicolumn{2}{c}{} \\ \cmidrule{2-9} 
  & Approx. LQCI & (0, 1e-5) & (0.3, 0.4) & 1.2 & $10^2$ & 0.2 & $42.8 \pm 2.1$ & $186.1 \pm 3.9$ \\
  & Approx. LQCI & (0, 1e-5) & (0.3, 0.4) & 1.2 & $10^3$ & 0.2 & $38.8 \pm 8.8$ & $152.6 \pm 9.0$ \\
  & Approx. LQCI & (0, 1e-5) & (0.3, 0.4) & 1.2 & $10^4$ & 0.2 & $38.8 \pm 9.7$ & $145.5 \pm 9.5$ \\
\bottomrule
\end{tabular}
}
% \vspace{0.05in}
% \end{subtable}
% \begin{subtable}{\textwidth}
% \centering
% \resizebox{\textwidth}{!}{%
% \begin{tabular}{|c|c|c|c|c|c|}
% \hline
% Expt. & Wallclock Time (Min) & CPU Time (Min) & {RAM per Thread} & {Mean Sampling Time (Sec)} \\ \hline
% 1    &  444 & 6102 & $\leq 6$   GB & $\leq$ 1e-4 \\ \hline
% 2    &  540 & 5568 & $\leq 6$   GB & $\leq$ 1e-4 \\ \hline
% 3    &  444 & 6102 & $\leq 6$   GB & $\leq$ 1e-4 \\ \hline
% 4    & $23.7 \pm 0.6$ & $93.3  \pm 1.4$ & $\leq 250$ MB & 18.8 \\ \hline
% 5    & $21.2 \pm 0.7$ & $81.5  \pm 1.1$ & $\leq 250$ MB & 18.8 \\ \hline
% 6    & $20.2 \pm 0.7$ & $78.4  \pm 3.4$ & $\leq 250$ MB & 18.7 \\ \hline
% 7-9  & $DNC$  & $DNC$   & $DNC$         & $DNC$ \\ \hline
% 10   & $42.8 \pm 2.1$ & $186.1 \pm 3.9$ & $\leq 250$ MB & 17.4 \\ \hline
% 11   & $38.8 \pm 8.8$ & $152.6 \pm 9.0$ & $\leq 250$ MB & 18.1 \\ \hline
% 12   & $38.8 \pm 9.7$ & $145.5 \pm 9.5$ & $\leq 250$ MB & 17.9 \\ \hline
% \end{tabular}%
% }
% \end{subtable}
\label{tab:experiments}
\end{table}

We can draw several conclusions from these results. Improviser construction with the exact algorithm is significantly more expensive than with the approximate algorithm, in both CPU time and RAM. This is not surprising, as the exact encodings resulted in enormous DFAs which, for the large map, approached $10^{10}$ states. Conversely, sampling is much faster for the exact algorithm, with no SAT queries required. We can also see that the approximate algorithm can be used to practically solve problems that are infeasible to solve exactly, such as the large-map problem. We expect new developments in the relatively young field of approximate model counting/sampling will further speed up our algorithm.

Visualizing several randomly-chosen traces from our exact QCI and MELQCI experiments in Fig. \ref{fig:rp_traces}, we can see the importance of labels. In unlabeled QCI, the robot always charged at the substation near the main road due to the lower expected cost of such paths. In contrast, MELQCI yielded a near-uniform distribution over the charging stations. This increase in diversity was not free, with the average cost rising to 21.4 for MELQCI from 8.7 for QCI.
This trade-off demonstrates how LQCI allows us to balance the need for control over our improvisations with the need for meaningful diversity (not merely randomness) by choosing appropriate label functions.

\begin{figure}[tb]
    \centering
    \begin{subfigure}[b]{0.35\textwidth}
        \centering
        \includegraphics[width=\textwidth]{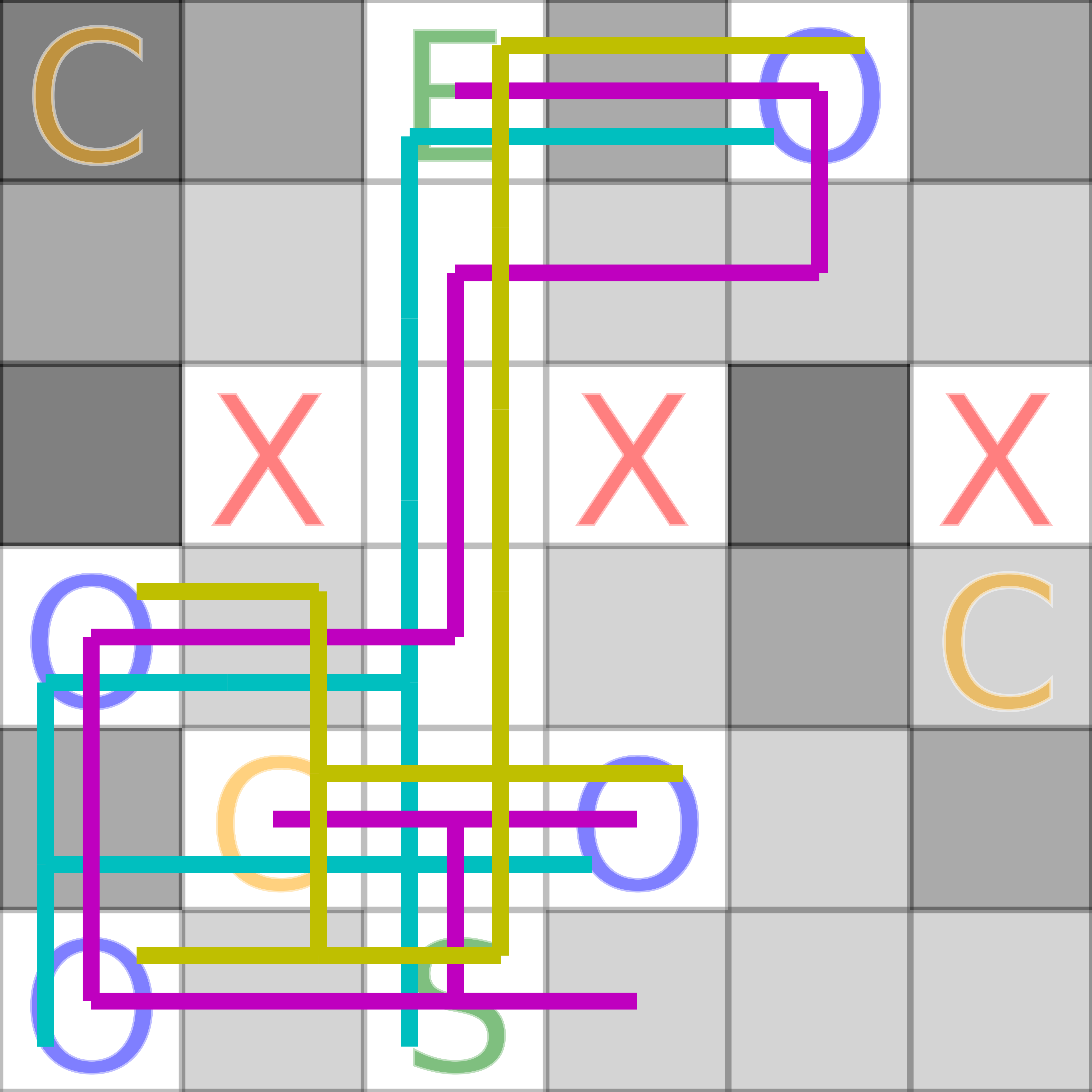}
        \caption{QCI Traces}
    \end{subfigure}
    \hspace{0.1\textwidth}
    \begin{subfigure}[b]{0.35\textwidth}
        \centering
        \includegraphics[width=\textwidth]{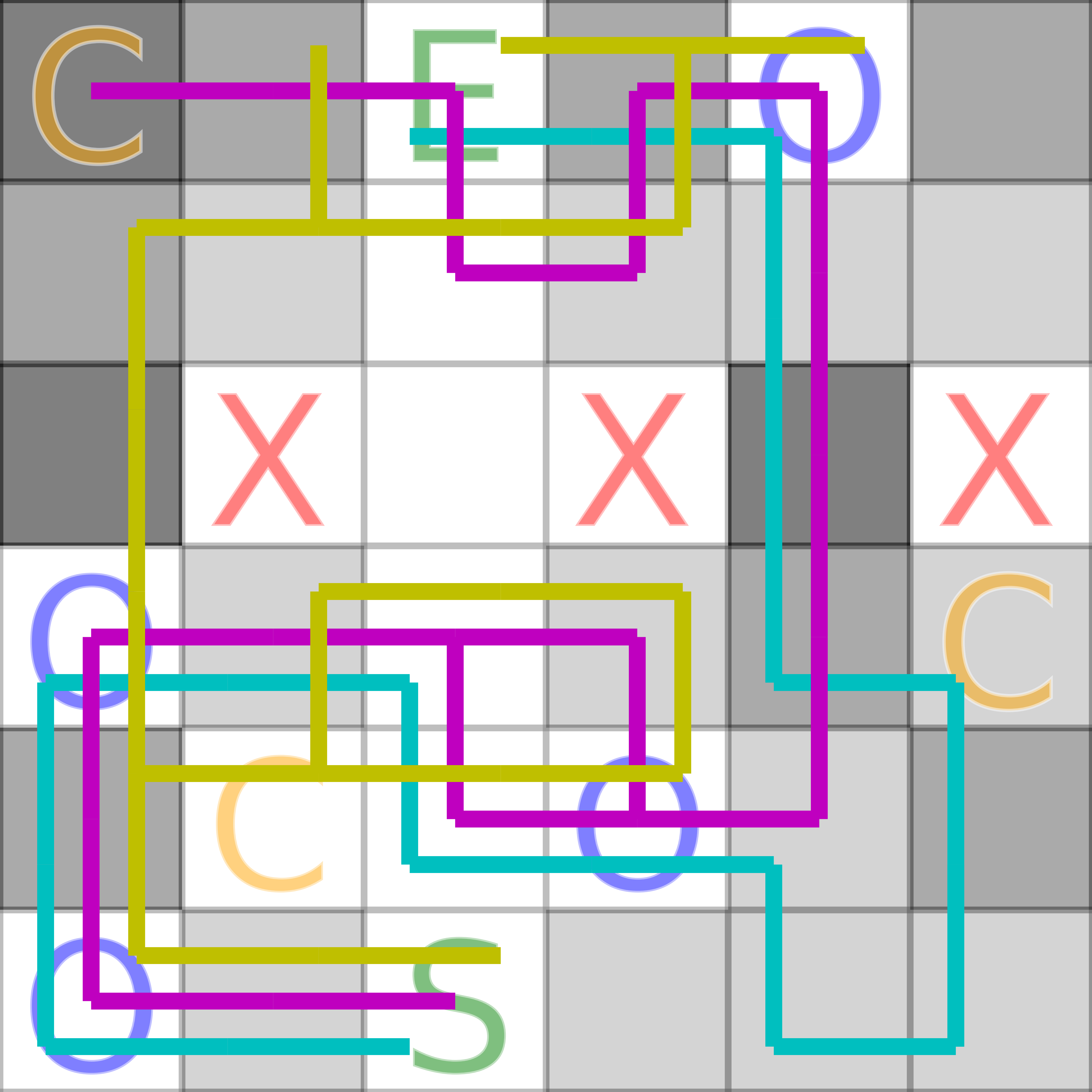}
        \caption{MELQCI Traces}
    \end{subfigure}
    \caption{Randomly-selected traces generated by the QCI/MELQCI improvisers for the 6x6 map. Note that all the QCI traces use the same charging station.}
    \label{fig:rp_traces}
\end{figure}

\section{Conclusion}
\label{sec:conclusion}

In this paper, we introduced \emph{labelled quantitative control improvisation} as a framework allowing correct-by-construction synthesis of randomized systems whose behavior must be diverse with respect to a label function and near-optimal with respect to a cost function.
We studied the theory of LQCI problems and developed algorithms for solving them for broad classes of specifications encoded as finite automata or Boolean formulas.
Our experiments demonstrated how our framework can be used to formalize and solve realistic robotic planning problems.

There are a number of clear directions for future work.
Scalability is an evident concern: our experiments show that our algorithms can require substantial resources to solve even relatively small LQCI problems.
While LQCI with Boolean formulas is a difficult \textsf{\#P}-hard problem, our algorithms will directly benefit from future progress in model counting; our DFA algorithms could also be improved through the use of abstraction to reduce state-space explosion.
We also plan to explore generalizations of our algorithms, such as extending our approximate scheme to MELQCI and to problems with exponentially-many labels, as well as potentially infinite traces.
Finally, we are investigating extensions of the LQCI problem to \emph{reactive} settings with adversarial environments, and to \emph{black-box} settings for design-space exploration and other problems where we do not have complete models for the cost function and other constraints.\\

\noindent\textbf{Acknowledgements.}
The authors thank Skyler Stewart for designing Fig. 2, and several anonymous reviewers for their helpful comments.
This work was supported in part by DARPA contract FA8750-20-C-0156 (SDCPS).

\bibliographystyle{splncs04}
\bibliography{bibliography}

\newpage

\appendix

\section{Proofs}

\textbf{Theorem \ref{thm:lqci_greedy} Full Proof}
\begin{proof} 
First, we verify that when conditions (1) and (2) of the Theorem hold, the greedy LQCI construction defines a valid probability distribution.
Since probability of words under $D_i$ is uniform within a cost class, we will use an abuse of notation to apply $D_i$ to cost classes as well, to be understood as the sum of the probabilities of the words in that cost class. As described, if cost class $I_{i,k}$ has index $j$ in the sorted list $\delta^i$ (i.e., $I_{i,k} = \delta^i_j$), then our distribution $D_i$ over the cost classes for label $i$ assigns it probability as follows:

	\begin{align*}
			D_i(I_{i,k}) = 
			D_i(\delta^i_j) =	\begin{cases}
									\hat{\beta}_i |\delta^i_j| \qquad &j < r			\\
									\hat{\beta}_i(o_i - \sum_{k=1}^{r-1} |\delta_k^i|) + \hat{\alpha}_i (\sum_{k=1}^r |\delta_k^i| - o_i) \qquad &j = r	\\
									\hat{\alpha}_i |\delta^i_j| \qquad &j > r			\\
								\end{cases}
		\end{align*}
		(recalling that $r \ge 1$ is defined to be the least index such that $\sum_{k=1}^r |\delta^i_k| > o_i$, if one exists; otherwise we may set $r = |\Theta|+1$).
		
The distribution over label classes is given as follows, if label class $I_i$ has index $j$ in the sorted list $\delta$ (i.e. $I_i = \delta_j$):

		\begin{align*}
			\hat{D}(I_i) = \hat{D}(\delta_j) = 	\begin{cases}
							 							\rho 	& \qquad j \leq u		\\
							 							1 - \rho u  - \lambda (|\Omega| - u - 1)  	& \qquad j = u + 1		\\
														\lambda & \qquad j > u + 1
													\end{cases}
 		\end{align*}
 		
 		Since the sets $I_i$ partition $I$ and the sets $I_{i,k}$ partition $I_i$, we now construct a unified distribution over all our words by combining our marginal and conditional distributions, again with assumed uniform probability within the cost classes, as follows,
 		
 		\begin{align*}
 			D(I_{i,k}) = \hat{D}(I_i) D_i(I_{i,k})
 		\end{align*}
 		
 		However, for this to be well-defined we must show that $\hat{D}$ and each $D_i$ are valid probability distributions that sum to 1.
 		For each $D_i$, its probabilities are nonnegative by the definition of $r$ and the fact that $o_i = (1 - \hat{\alpha}_i |I_i|)/(\hat{\beta}_i - \hat{\alpha}_i) \ge 0$ (due to our assumption that $|I_i| \le 1 / \hat{\alpha}_i$).
 		The sum of the probabilities is:
 		
 		\begin{align*}
 			\sum_{w \in I_{i,k}} D(w) &= \hat{\beta}_i \sum_{k=1}^{r-1} |\delta_k^i| + \hat{\beta}_i(o_i - \sum_{k=1}^{r-1} |\delta_k^i|) + \hat{\alpha}_i (\sum_{k=1}^r |\delta_k^i| - o_i) + \hat{\alpha}_i \sum_{k=r+1}^{|\Theta|} |\delta_k^i|	\\
 			&= \hat{\beta}_i o_i + \hat{\alpha}_i \sum_{k=1}^r |\delta_k^i| - \hat{\alpha}_i o_i + \hat{\alpha}_i \sum_{k=r+1}^{|\Theta|} |\delta_k^i|	\\
 			&= \hat{\beta}_i \frac{1 - \hat{\alpha}_i |I_i|}{\hat{\beta}_i - \hat{\alpha}_i} + \hat{\alpha}_i \sum_{k=1}^r |\delta_k^i| - \hat{\alpha}_i \frac{1 - \hat{\alpha}_i |I_i|}{\hat{\beta}_i - \hat{\alpha}_i} + \hat{\alpha}_i \sum_{k=r+1}^{|\Theta|} |\delta_k^i|	\\
 			&= 1 - \hat{\alpha}_i |I_i| + \hat{\alpha}_i \sum_{k=1}^r |\delta_k^i| + \hat{\alpha}_i \sum_{k=r+1}^{|\Theta|} |\delta_k^i|	\\
 			&= 1 - \hat{\alpha}_i |I_i| + \hat{\alpha}_i |I_i|	\\
 			&= 1
 		\end{align*}
 		as desired.
 		Note that in the case when $\hat{\alpha}_i = \hat{\beta}_i$, as stated above we instead define $D_i(\delta^i_j) = \hat{\alpha}_i |\delta^i_j|$: the probabilities are still nonnegative, and sum to 1 due to our assumption that $1 / \hat{\beta}_i \le |I_i| \le 1 / \hat{\alpha}_i$, which forces $|I_i| = 1 / \hat{\alpha}_i$.
 		
 		For $\hat{D}$, its probabilities are nonnegative since we assume $|\Omega| \le 1/\lambda$ and so
 		\begin{align*}
 		1 - \rho u - \lambda(|\Omega| - u - 1) &= 1 - (\rho - \lambda) u - \lambda |\Omega| + \lambda \\
 		&= 1 - \lambda |\Omega| - (\rho - \lambda) \floor{\frac{1 - \lambda|\Omega|}{\rho - \lambda}} + \lambda \\
 		&\ge \lambda \ge 0 .
 		\end{align*}
 		
 		The sum of the probabilities is:
		\begin{align*}
			\sum_{i = 1}^{|\Omega|} \hat{D}(I_i) &= \rho u + \left[ 1 - \rho u  - \lambda (|\Omega| - u - 1) \right] + \lambda(|\Omega| - u - 1)		\\
			&= \rho \floor{\frac{1 - |\Omega| \lambda}{\rho - \lambda}} + 1 + \lambda u - \rho u - \lambda |\Omega| + \lambda + \lambda(|\Omega| -  \floor{\frac{1 - |\Omega| \lambda}{\rho - \lambda}} - 1)	\\
			&= \rho \floor{\frac{1 - |\Omega| \lambda}{\rho - \lambda}} + 1 + \lambda u - \rho u - \lambda |\Omega| + \lambda + \lambda |\Omega| - \lambda \floor{\frac{1 - |\Omega| \lambda}{\rho - \lambda}} - \lambda	\\
			&= \rho \floor{\frac{1 - |\Omega| \lambda}{\rho - \lambda}} + 1 + \lambda u - \rho u - \lambda \floor{\frac{1 - |\Omega| \lambda}{\rho - \lambda}}	\\
			&= 1 + (\rho - \lambda) \floor{\frac{1 - |\Omega| \lambda}{\rho - \lambda}} + (\lambda - \rho) u	\\
			&= 1 + (\rho - \lambda) \floor{\frac{1 - |\Omega| \lambda}{\rho - \lambda}} + (\lambda - \rho) \floor{\frac{1 - |\Omega| \lambda}{\rho - \lambda}}\\
			&= 1
		\end{align*}
        again as desired.
        Note that in the case when $\lambda = \rho$, we instead define $\hat{D}(I_i) = \lambda$: the probabilities are still nonnegative, and sum to 1 due to our assumption that $1/\rho \le |\Omega| \le 1/\lambda$, which forces $|\Omega| = 1/\lambda$.

		As all our marginal and conditional distributions are well-defined probability distributions, it follows that $D$ is also a well-defined probability distribution.
		
Next we must show that a given LQCI instance is feasible if and only if all of the following hold:
		\begin{enumerate}
			\item $ \displaystyle \frac{1}{\rho} \leq |\Omega| \leq \frac{1}{\lambda}$
			\item $ \displaystyle \forall i \in \{1, \dots, |\Omega|\}, \ \frac{1}{\hat{\beta_i}} \leq |I_i| \leq \frac{1}{\hat{\alpha_i}}$
			\item The greedy LQCI construction produces a distribution $D$ whose expected cost satisfies $E[\mathcal{K}(w) \mid w \leftarrow D] \leq c$.
		\end{enumerate}
		
		\noindent \textbf{($\Leftarrow$) If the above conditions hold, the LQCI instance is feasible.} ~ \\
		    We showed above that conditions (1) and (2) imply the greedy LQCI construction defines a valid distribution $D$.
		    By construction, $D$ only gives probability to words contained in some cost class and therefore in $I$, so it satisfies the hard constraint.
		    Condition (3) states that $D$ satisfies the cost constraint.
		    By the definition of $\hat{D}$, the probability of each label class is between $\lambda$ and $\rho$, so $D$ satisfies the randomness over labels constraint.
		    Finally, the definition of $D_i$ ensures that the probability of each word in $I_i$ is between $\hat{\alpha}_i$ and $\hat{\beta}_i$, so $D$ satisfies the randomness over words constraint.
		    Therefore $D$ is an improvising distribution, so the LQCI instance is feasible.
		
		\noindent \textbf{($\Rightarrow$) If the LQCI instance is feasible, the above conditions hold.} ~ \\
			Assuming that the LQCI instance is feasible, and thus that there is an improvising distribution, we will show that the above conditions hold. Let $D'$ be the improvising distribution.
			
			\begin{enumerate}
				\item $ \displaystyle \frac{1}{\rho} \leq |\Omega| \leq \frac{1}{\lambda}$\\

				Note that a probability distribution must sum to 1, and that a label class must have between $\lambda$ and $\rho$ marginal probability of being selected in an improvising distribution, so,
				\begin{align*}
				1 &= \sum_{i = 1}^{|\Omega|} \Pr[w \in I_i \mid w \leftarrow D'] \leq \rho |\Omega|		\\
				\therefore \quad \frac{1}{\rho} &\leq |\Omega|						\\
				1 &= \sum_{i = 1}^{|\Omega|} \Pr[w \in I_i \mid w \leftarrow D'] \geq \lambda |\Omega|	\\
				\therefore \quad |\Omega| &\leq \frac{1}{\lambda}
				\end{align*}
				As we can see, $\frac{1}{\rho} \leq |\Omega| \leq \frac{1}{\lambda}$.
				
				\item $ \displaystyle \forall i \in \{1, \dots, |\Omega|\}, \ \frac{1}{\hat{\beta_i}} \leq |I_i| \leq \frac{1}{\hat{\alpha_i}}$\\

				Again, note that a probability distribution must sum to 1, and that a word must have between $\alpha_i$ and $\beta_i$ conditional probability of being selected, so,
				\begin{align*}
				1 &= \sum_{w \in I_i} \Pr[w \mid w \in I_i, w \leftarrow D'] \leq \hat{\beta}_i |I_i|						\\
				\therefore \quad \frac{1}{\hat{\beta}_i} &\leq |I_i|						\\
				1 &= \sum_{w \in I_i} \Pr[w \mid w \in I_i, w \leftarrow D'] \geq \hat{\alpha}_i |I_i|						\\
				\therefore \quad |I_i| &\leq \frac{1}{\hat{\alpha}_i}
				\end{align*}	
				As we can see, $\frac{1}{\hat{\beta}_i} \leq |I_i| \leq \frac{1}{\hat{\alpha}_i}$.
			
				\item The greedy LQCI construction satisfies the cost constraint.\\
				We will first show that the greedy LQCI construction satisfies all the non-cost LQCI constraints (hard constraint, randomness over words, and randomness over labels).
				To begin with $D$ samples exclusively over $I$, meaning that it trivially satisfies the hard constraint.
				
				We now show that $D$ satisfies the randomness over words constraint. In most cases this is trivially satisfied by having assigned $\hat{\alpha}_i$ or $\hat{\beta}_i$ probability to each word. The only exception is that for one cost class we assign $\hat{\beta}_i(o_i - \sum_{k=1}^{r-1} |\delta_k^i|) + \hat{\alpha}_i (\sum_{k=1}^r |\delta_k^i| - o_i)$ probability. However, below we show that we can bound this value to between $\hat{\alpha}_i |\delta^i_k|$ and $\hat{\beta}_i |\delta^i_k|$,

				\begin{align*}
					& \hat{\beta}_i(o_i - \sum_{k=1}^{r-1} |\delta_k^i|) + \hat{\alpha}_i (\sum_{k=1}^r |\delta_k^i| - o_i) 					\geq \hat{\alpha}_i(o_i - \sum_{k=1}^{r-1} |\delta_k^i|) + \hat{\alpha}_i (\sum_{k=1}^r |\delta_k^i| - o_i) 
					= \hat{\alpha}_i |\delta^i_r| \\
					& \hat{\beta}_i(o_i - \sum_{k=1}^{r-1} |\delta_k^i|) + \hat{\alpha}_i (\sum_{k=1}^r |\delta_k^i| - o_i) 					\leq \hat{\beta}_i(o_i - \sum_{k=1}^{r-1} |\delta_k^i|) + \hat{\beta}_i (\sum_{k=1}^r |\delta_k^i| - o_i) 
					= \hat{\beta}_i |\delta^i_r| \\
					\therefore \quad & \hat{\alpha}_i |\delta^i_r| \leq \hat{\beta}_i(o_i - \sum_{k=1}^{r-1} |\delta_k^i|) + \hat{\alpha}_i (\sum_{k=1}^r |\delta_k^i| - o_i) \leq \hat{\beta}_i |\delta^i_r|
				\end{align*}
				Therefore each word is also assigned a probability between $\hat{\alpha}_i$ and $\hat{\beta}_i$.
				
				We now show that $D$ satisfies the randomness over labels constraint. Again, in most cases this is trivially satisfied by assigning $\lambda$ or $\rho$ marginal probability to each label class. The only exception is that for one label class we assign $1 - \rho u - \lambda (|\Omega| - u - 1)$ marginal probability. Below we show that we can bound this value between $\lambda$ and $\rho$.
				
				 \begin{align*}
				 	& \ 1 + \lambda k - \rho k - |\Omega| \lambda + \lambda		\\
					&= 1 + \lambda \floor{\frac{1 - |\Omega| \lambda}{\rho - \lambda}} - \rho \floor{\frac{1 - |\Omega| \lambda}{\rho - \lambda}} - |\Omega| \lambda + \lambda								\\
					&= 1 + (\lambda - \rho) \floor{\frac{1 - |\Omega| \lambda}{\rho - \lambda}} - |\Omega| \lambda + \lambda	\quad (\text{Recall by definition $\lambda \leq \rho$}) \\	
				\end{align*}								
				
				Then for some $\iota <1:$
				
				\begin{minipage}{0.4\textwidth}
				\begin{align*}
					&\geq 1 + (\lambda - \rho) \frac{1 - |\Omega| \lambda}{\rho - \lambda} - |\Omega| \lambda + \lambda		\\
					&= 1 + (\lambda - \rho) \frac{|\Omega| \lambda - 1}{\lambda - \rho} - |\Omega| \lambda + \lambda		\\
					&= 1 + |\Omega| \lambda - 1  - |\Omega| \lambda + \lambda	\\
					&= \lambda			\\
					\therefore \quad &1 + \lambda k - \rho k - |\Omega| \lambda + \lambda \geq \lambda		\\
				\end{align*}
				\end{minipage}
				\begin{minipage}{0.4\textwidth}
				\begin{align*}
					&= 1 + (\lambda - \rho) \left( \frac{|\Omega| \lambda - 1}{\lambda - \rho} - \iota \right) - |\Omega| \lambda + \lambda	 \\ 
					&= 1 + |\Omega| \lambda - 1 - \iota \lambda + \epsilon \rho - |\Omega| \lambda + \lambda\\
					&= - \iota \lambda + \iota \rho + \lambda\\
					&= \iota (\rho - \lambda) + \lambda	\\
					&< \rho - \lambda + \lambda	\\
					&= \rho				\\
					\therefore \quad &1 + \lambda k - \rho k - |\Omega| \lambda + \lambda < \rho		\\
				\end{align*}
				\end{minipage}

				Thus all label classes receives between $\lambda$ and $\rho$ marginal probability.
				
				We finally show that the cost constraint is satisfied by $D$. We will show this via a transformation argument. Given that $D'$ is an improvising distribution, it must follow the randomness over words and randomness over labels constraints. In addition, as the distribution within a cost class does not effect the cost, we will only compare $D$ and $D'$ as distributions over the cost classes. The greedy LQCI construction assigns as much conditional probability to the cost classes with the lowest cost as allowed by our randomness over words requirement. Any more probability assigned to the lowest cost classes would violate these requirements, and any less would result in an equal or higher expected cost. Therefore, for each label $\ell_i$, we could change the conditional probability distribution $D_i'$ to the corresponding one from the greedy LQCI construction and have a lower or equal cost. We can perform a similar transformation for the marginal label distribution. The Greedy LQCI Construction assigns as much marginal probability to the label classes with lower expected cost as is allowed by the Randomness over Labels constraint. Again, any more probability to any of the lower cost label classes would violate these requirements, and any less would result in an equal or higher cost. Therefore, we can replace the marginal distribution in $D'$ with that of $D$ and have a lower or equal cost. Combining these two ideas, we can see that for any improvising distribution $D'$, the $D$ returned by the greedy LQCI construction has lower or equal cost. As $D'$ has low enough cost to be an improvising distribution and satisfy the cost constraint, $D$ must also satisfy the cost constraint. \qed

			\end{enumerate}
\end{proof}

Since the cost-minimality of the greedy LQCI construction that we have just shown will be useful later, we state it as a lemma:

\begin{lemma}
    \label{lemma:mincost}
    Let $\mathcal{C}$ be an LQCI instance. Then among distributions which satisfy the hard constraint, randomness over words constraint, and randomness over labels constraint (if they exist), the distribution returned by the greedy LQCI construction has minimal expected cost.
\end{lemma}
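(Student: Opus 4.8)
The plan is to recognize that this lemma simply isolates the cost-minimality claim already established while verifying condition (3) in the ($\Rightarrow$) direction of Theorem~\ref{thm:lqci_greedy}, and to repackage that transformation argument as a self-contained statement. First I would observe that since all words in a single cost class $I_{i,k}$ share the same cost $\theta_k$ and the greedy construction is uniform within each cost class, the expected cost of any distribution depends only on the total weight it places on each cost class. Hence it suffices to compare $D$ against an arbitrary competitor $D'$ (one satisfying the hard, randomness-over-words, and randomness-over-labels constraints) purely as distributions over the cost classes.

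The argument then proceeds in two exchange stages. In the first stage I fix a label $\ell_i$ and compare the conditional distributions $D_i$ and $D_i'$ over the cost classes of $I_i$. Both assign each word a conditional probability in $[\hat{\alpha}_i, \hat{\beta}_i]$ summing to $1$, so each is feasible for the linear program of minimizing $\sum_k \theta_k\, p_k$ subject to the box constraints $\hat{\alpha}_i \le p_w \le \hat{\beta}_i$ and normalization. The greedy cost construction is precisely the greedy (``fractional-knapsack'') solution to this LP: it saturates the cheapest cost classes (in the order $\delta^i$) at the cap $\hat{\beta}_i$, fixes the threshold class $\delta^i_r$ at the exact residual determined by $o_i$, and leaves the floor $\hat{\alpha}_i$ on every remaining word. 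I would argue optimality by a standard exchange argument: since $D_i$ saturates the cheapest words at $\hat{\beta}_i$ and the most expensive at $\hat{\alpha}_i$, any feasible $D_i' \neq D_i$ is obtainable from $D_i$ by transferring a positive amount of probability from cheaper cost classes to strictly more expensive ones, which can only increase expected cost. Thus $E[\CF \mid w \in I_i,\, w \leftarrow D_i] \le E[\CF \mid w \in I_i,\, w \leftarrow D_i']$.

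In the second stage I replace each conditional $D_i'$ by $D_i$ while keeping the label marginal of $D'$ fixed, yielding an intermediate distribution of no greater cost by Stage 1. This intermediate distribution and $D$ now differ only in their marginal over labels. Because every label class now carries conditional expected cost equal to the minimal value achieved by $D_i$, the greedy label construction solves the analogous LP over labels, minimizing $\sum_i q_i\, E[\CF \mid w \leftarrow D_i]$ subject to $\lambda \le q_i \le \rho$ and normalization (using the threshold $u$ and the sorted order $\delta$), and the identical exchange argument, now moving marginal mass from cheaper to more expensive label classes, shows $\hat{D}$ is optimal. Chaining the two inequalities gives $E[\CF \mid w \leftarrow D] \le E[\CF \mid w \leftarrow D']$, as required.

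The main obstacle is making the exchange argument fully rigorous rather than merely intuitive: one must verify that any feasible competitor can genuinely be reached from the greedy solution by a sequence of cheap-to-expensive mass transfers, which amounts to showing that the greedy point is the cost-minimizing vertex of the box-constrained simplex. This is where the sorted orderings $\delta^i$ and $\delta$ together with the exact threshold values $o_i$ and $u$ carry the weight of the proof. Care is also needed at the degenerate boundaries $\hat{\alpha}_i = \hat{\beta}_i$ and $\lambda = \rho$, where the construction is defined separately; in those cases conditions (1) and (2) of Theorem~\ref{thm:lqci_greedy} force the feasible distribution to be unique, so minimality holds trivially.
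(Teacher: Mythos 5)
Your proposal is correct and takes essentially the same approach as the paper: the paper establishes this lemma via exactly the two-stage transformation argument you describe (first replacing each conditional distribution $D_i'$ with the greedy $D_i$, then replacing the label marginal with $\hat{D}$), embedded in the ($\Rightarrow$) direction, condition (3), of the proof of Theorem~\ref{thm:lqci_greedy}. Your LP/fractional-knapsack exchange framing simply makes rigorous what the paper argues informally, and your handling of the degenerate cases $\hat{\alpha}_i = \hat{\beta}_i$ and $\lambda = \rho$ is a sound addition.
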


\textbf{Theorem \ref{thm:exact_scheme} Full Proof}

\begin{proof}
We recall the exact operations list in Def. \ref{def:sufficient_ops}:

    \begin{enumerate}
        \item Compute the finite list of possible costs $\Theta$.
        \item For each $i \in \{1, \dots, |\Omega|\}$ and each $k \in \Theta$ compute $|I_{i,k}|$.
        \item For each $i \in \{1, \dots, |\Omega|\}$ and each $k \in \Theta$, sample uniformly from the cost class $I_{i,k}$.
    \end{enumerate}

The idea of the proof is to use the above operations to efficiently construct the distribution returned by the greedy LQCI construction over the given parameters (or determine the instance is infeasible), find its expected cost, and (if feasible) sample from it.
Suppose we are given an LQCI instance from a class where the operaions above can be performed in polynomial time.
Because operation (\ref{op:compute_costs}) runs in polynomial time, the number of costs $|\Theta|$ must be at most polynomial in the size of the instance (and of course the number of labels as well since they are part of the instance).
We can compute the size of each label class $I_i$, of which there are polynomially many, using operations (\ref{op:compute_costs}) and (\ref{op:count_cost_class}) to measure each $|I_{i,k}|$ and applying $|I_i| = \sum_{k \in \Theta} |I_{i,k}|$.
Along with the parameters of the instance, this allows us to determine whether the first two conditions of Theorem \ref{thm:lqci_greedy} hold.
If not, then our algorithm returns that the instance is infeasible.
Otherwise, for each label $i$, we follow the greedy cost construction outlined in Section \ref{sec:theory} to assign probabilities to each cost class of the label, and compute the expected cost of the label class under those probabilities.
This computation requires evaluating a sum with a number of terms bounded by the number of cost classes (which must be polynomial), so it is polynomial.
Then we follow the greedy label construction to assign a marginal probability to each label class, a process which is also polynomial given the polynomial number of label classes.
Having now obtained a complete description of the greedy LQCI distribution, we compute its expected cost and see if it satisfies the cost requirement.
If it does not, then by Theorem \ref{thm:lqci_greedy} the instance is infeasible.
If it does, then the greedy distribution is an improvising distribution.
An improviser sampling from this distribution can then work as follows: we randomly pick a cost class $I_{i,k}$ with the probability assigned by the greedy distribution ($\hat{D}(I_i)D_i(I_{i,k})$), then apply operation (\ref{op:sample_cost_class}) to uniformly sample from it.
This improviser runs in time polynomial in the size of the LQCI instance, and constructing it took polynomial time, so this procedure is a polynomial-time improvisation scheme. \qed
\end{proof} 

\textbf{Theorem \ref{thm:output_dfa} Full Proof}

\begin{proof}
Using classical algorithms for DFAs, we can uniformly count and sample from the accepting words of a DFA of a particular length in time polynomial in the length and the size of the DFA's presentation~\cite{hickey:uniform_random_gen_cfg}, and we can create the product DFA from two DFAs in time multiplicative in the size of each component DFA \cite{DBLP:books/daglib/0016921}.
The number of different costs is bounded by the number of accepting states of the cost DFA, and so there are a linear number of elements in $\Theta$ which we can obtain in linear time to implement operation (\ref{op:compute_costs}).
For each $i \in \{1, \dots, |\Omega|\}$ (which we are given by the instance) and $k \in \{1, \dots, |\Theta|\}$ (just computed) we can adjust the cost and label DFAs to accept only words of label $i$ and cost $k$.
Then we can use the product construction to create a DFA which is the intersection of the hard constraint DFA and this DFA.
Then for each of the lengths between $m$ and $n$ (linearly-many since they are presented in unary), we can count the number of accepting words of that length for this DFA in polynomial time to implement operation (\ref{op:count_cost_class}).
For operation (\ref{op:sample_cost_class}), we can uniformly sample from this DFA in polynomial time.
So we can implement all operations required by Theorem~\ref{thm:exact_scheme} in polynomial time, and therefore this class of LQCI instances has a polynomial-time improvisation scheme. \qed
\end{proof}

\textbf{Theorem \ref{thm:acc_dfa} Full Proof}

\begin{proof}
    First we treat the case where the values of the costs for each state are encoded in unary, not binary. 
    For operation (\ref{op:compute_costs}), we apply a dynamic programming approach. Let $c(q,s)$ refer to the multiset of the costs of words of length $s$ that end in state $q$, and let $k(q)$ be the cost associated with entering state $q$. Initialize $c(q,0) = \{k(q)\}$ for the start state and $c(q,0) = \emptyset$ for all other states.
    Then build the table for $c(q,s)$ for $s > 0$ according to the recursion $c(q,s) = k(q) + \cup_{p\in Parent(q)} c(p,s-1)$, where addition is performed elementwise on the multiset. In other words, you take the union of the multisets of the parents of a state, and then add the cost of entering state $q$ to every element of the new multiset.
    The cost of the highest-cost word of length between $m$ and $n$ is bounded by the cost $M$ of the highest-cost state times $n+1$.
    Since the costs are integers, the number of distinct costs is bounded by $M(n+1)$, and as $M$ and $n$ are encoded in unary, this means that all the sets $c(q,s)$ have size at most polynomial in the size of the instance.
    Therefore computing these sets for all states in the DFA and all lengths up to $n$ takes polynomial time.

    To create the DFA for the cost class $I_{i,k}$, we take the product of the hard constraint DFA, the adjusted label DFA which only accepts label $i$, as well as a DFA which accepts only words of cost $k$.
    This last DFA can be built by taking the cost DFA and creating $k+1$ copies of each state to track the current accumulated cost up to a limit of $k+1$.
    Since $k$ is encoded in unary, the product DFA, whose size is the product of its components, is polynomial in the size of the instance.
    Counting and uniform sampling of words of the appropriate length from this DFA is done in polynomial time as in Theorem \ref{thm:output_dfa} to implement operations (\ref{op:count_cost_class}) and (\ref{op:sample_cost_class}).
    So we can implement all operations required by Theorem~\ref{thm:exact_scheme} in polynomial time, and therefore this class of LQCI instances (with costs encoded in unary) has a polynomial-time improvisation scheme.
    
    This immediately leads to the theorem, as moving from the binary encoding of our hypothesis to the unary coding above results in a pseudopolynomial scheme for the binary encoding. \qed

\end{proof}

The following two lemmas will be used to prove Theorem \ref{thm:approx_lqci}.

\begin{lemma}{(Approximate Greedy Cost Algorithm)}
\label{lemma:agca}
Given a Boolean-encoded LQCI instance $\CII$ and label class $i$ with associated $\phi_i(x,y,z)$ and $r$ as in Section \ref{sec:approx_lqci_alg}, let $\alpha$ and $\beta$ be the randomness parameters for the label.
Suppose approximate model counts are performed with confidence $1-\delta$ and tolerance $\tau$.
If Algorithm \ref{alg:approx_greedy_cost} returns on line \ref{line:agc_feasible}, let $D$ be the distribution obtained by picking a bucket according to the returned probabilities and then almost-uniformly sampling from that bucket with tolerance $\epsilon$ (in this case we say ``$D$ exists'').
Then with probability at least $(1-\delta)^b$ we will have the following guarantees:
\begin{enumerate}
    \item If $D$ exists, the expected cost of an item sampled from it satisfies $Lo \leq E[\CF(w) \mid w\leftarrow D] \leq rLo$.
    \item If $D$ exists, it assigns no word probability greater than $(1+\epsilon) (1+\tau)^2  \beta$ or less than $\alpha/(1+\epsilon)(1+\tau)^2$.
    \item If Algorithm 1 returns False, then $\CII$ is not feasible.
    \item If any distribution $D'$ over label class $i$ satisfies the randomness parameters $\alpha,\beta$ (i.e. no word is sampled with probability less than $\alpha$ or more than $\beta$) then $Lo \leq E[\CF(w) \mid w\leftarrow D']$.
\end{enumerate}
\end{lemma}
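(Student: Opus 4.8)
The plan is to prove the four guarantees of Lemma~\ref{lemma:agca} by first establishing the high-probability event on which all model counts are accurate, and then reasoning deterministically within that event. Since Algorithm~\ref{alg:approx_greedy_cost} makes $b$ calls to the approximate counter, each with confidence $1-\delta$, a union bound gives that with probability at least $(1-\delta)^b$ \emph{every} estimate $c_k$ satisfies $|R_k|/(1+\tau) \leq c_k \leq (1+\tau)|R_k|$, where $R_k$ is the true size of bucket $k$. I would condition on this event for the remainder of the argument; all four claims are then deterministic consequences.

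\textbf{For claim (1),} observe that a word placed in bucket $k$ has true cost in $[r^{k-1}, r^k)$, so the contribution of bucket $k$ to the expected cost lies between $p_k r^{k-1}$ and $p_k r^k = r \cdot p_k r^{k-1}$. Summing over buckets and recalling $Lo = \sum_j p_j r^{j-1}$ immediately yields $Lo \leq E[\CF(w)\mid w\leftarrow D] \leq rLo$. This step is essentially bookkeeping once the bucketing is set up.

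\textbf{For claims (2) and (4),} the key is to translate the per-\emph{bucket} probabilities $p_k$ into per-\emph{word} probabilities and track how the three sources of error---the $(1+\tau)$ count error, a second $(1+\tau)$ because we compute $p_k$ from $c_k$ rather than the true count, and the $(1+\epsilon)$ almost-uniform sampling error---compound multiplicatively. Within bucket $k$ a word receives conditional probability at most $(1+\epsilon)/c_k$ (almost-uniform sampling divided by estimated size, using $c_k \ge |R_k|/(1+\tau)$), and since line~7 caps $p_k \le (1+\tau)\beta c_k$, each word's total probability is at most $(1+\epsilon)(1+\tau)\beta \cdot (c_k/c_k)$ inflated by one more $(1+\tau)$ from the count error, giving the stated $(1+\epsilon)(1+\tau)^2\beta$; the symmetric lower bound follows from $p_k \ge \alpha c_k/(1+\tau)$ in line~3. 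For claim~(4), I would argue that the exact greedy cost construction is cost-minimal among distributions respecting $\alpha,\beta$ (this is essentially Lemma~\ref{lemma:mincost} specialized to one label class), and that the deliberate $1+\tau$ inflation factors in lines~3 and~7 guarantee that, even under worst-case count errors, our bucket distribution assigns \emph{at least} as much probability to cheap buckets as the exact construction would---so the lower bound $Lo$ genuinely lower-bounds the optimum $E[\CF(w)\mid w\leftarrow D']$.

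\textbf{For claim (3),} returning False on line~4 means the inflated minimum-probability lower bounds $p_k = \alpha c_k/(1+\tau)$ already sum above $1$; using $c_k \ge |R_k|/(1+\tau)$ this implies the \emph{true} minimum required probability $\alpha|I_i|$ exceeds $1$, violating condition~(2) of Theorem~\ref{thm:lqci_greedy} and hence forcing infeasibility. Returning False on line~8 is the dual case, showing $\beta|I_i| < 1$. \textbf{The main obstacle} I expect is claim~(4): carefully verifying that the two independent $1+\tau$ slack factors, inserted precisely to absorb over- and under-estimation of bucket sizes, interact correctly so that $Lo$ remains a valid \emph{lower} bound on the true optimal expected cost despite the counts being only approximate. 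This requires comparing our constructed bucket distribution against the ideal greedy one cost-class-by-cost-class and confirming the error never flips the inequality---the accounting here is the delicate heart of the lemma, whereas the union bound, claim~(1), and the two-sided per-word bounds are comparatively mechanical.
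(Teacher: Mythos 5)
Your proposal follows essentially the same route as the paper's proof: condition on the event that all $b$ approximate counts are within the $(1+\tau)$ tolerance, read off claim (1) from the fact that every word in bucket $j$ has cost in $[r^{j-1},r^j)$, compound the two $(1+\tau)$ factors with the $(1+\epsilon)$ sampling factor for claim (2), derive contradictions with the feasibility conditions of Theorem~\ref{thm:lqci_greedy} for claim (3), and prove claim (4) by a prefix-dominance comparison of the returned bucket probabilities against those induced by the exact greedy cost construction, invoking its cost-minimality (Lemma~\ref{lemma:mincost}). The only blemishes are cosmetic rather than gaps: the $(1-\delta)^b$ bound comes from independence of the counter calls (a union bound alone gives $1-b\delta$), and in claims (2) and (3) you cite the direction $c_k \geq |R_k|/(1+\tau)$ of the counter guarantee where the steps actually need the other direction $c_k \leq (1+\tau)|R_k|$ --- both hold simultaneously on the conditioned event, so the argument goes through as the paper's does.
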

\begin{proof}
Since the approximate model counts are performed with confidence at least $1-\delta$ and tolerance $\tau$ and there are $b$ model counts, then with probability at least $(1-\delta)^{b}$ we have that for all $j$ that if $c_j$ is the approximate count for the $j$'th bucket and $c_j'$ the true count, then $c_j' /(1+\tau) \leq c_j \leq (1+\tau)c_j'$. We will assume this holds for the rest of the argument.\\

First observe that after being assigned in line (3), the $p_j$'s can only increase or stay the same in (6): as $\alpha \le \beta$ is required by the LQCI instance, $\alpha c_j/(1+\tau) \le (1+\tau)\beta c_j$. Then in (6), each $p_j$ is either increased to $(1+\tau)\beta c_j$, or is assigned $1-\sum_{j \neq k} p_j$, whichever is less. Moreover, only one $p_j$ is assigned $1-\sum_{j \neq k} p_j$ since afterwards the sum of the $p_j$'s is 1 and the loop breaks. Suppose now that at index $s \leq b$, $p_s$ is assigned $1-\sum_{j \neq s} p_j$. Then since after assignment $(\sum_{j \neq s} p_j) + \alpha c_s/(1+\tau) \leq 1$ but $(\sum_{j \neq s} p_j) + p_s = 1$, we have $p_s \geq \alpha c_s (1+\tau)$, as desired.\\

Next we show that if the algorithm returns values for the $p_j$'s, they form a distribution (which we call $D$), and that distribution's expected cost satisfies $Lo \leq E[\CF(w) \mid w\leftarrow D] \leq rLo$.

Since the algorithm did not terminate at (4), we know that $\sum_{j=1}^{b} p_j \leq 1$ at that point. Additionally, from (6) we know that when each $p_k$ re-appears, it can raise the total sum of the $p_j$'s to at most 1 (because each is set to a value $\leq 1 - \sum_{j \neq k} p_k$). Finally from (8) we know their sum cannot be less than 1, so their sum must be exactly 1 to reach (9) - where $Lo$ is assigned, and so the $p_j$'s form a probability distribution.

Then if the algorithm returns a distribution, it has expected cost $\sum_{j=1}^{b} p_j r_j'$, where $r_j'$ is the expected cost of almost-uniformly sampling the elements in bucket $j$ (wherefore we must have $r^{j-1} \leq r_j' < r^j$). Then we have $Lo = \sum_{j=1}^{b} p_j r^{j-1} \leq \sum_{j=1}^{b} p_j r_j' = E[\CF(w) \mid w\leftarrow D] \leq \sum_{j=1}^{b} p_j r^{j} = r Lo$.\\

Next, we show that if the algorithm returned a distribution, no element will be sampled with probability greater than $(1+\tau)^2(1+\epsilon)\beta$, and no element will be sampled with probability less than $\alpha/((1+\tau)^2(1+\epsilon))$.

As shown above, for all $j$, we have that $\alpha c_j/(1+\tau) \leq p_j \leq (1+\tau)\beta c_j$.
If we sampled exactly uniformly from a bucket with probability $p_j$, the chance of sampling any particular element would be $p_j/c_j'$, where $c_j'$ is the true count of bucket $j$.
Then by the previous, we have $\alpha c_j/((1+\tau)c_j') \leq p_j/c_j' \leq (1+\tau)\beta c_j/c_j'$. By our model counting argument in the first paragraph, we then have
\[ \frac{\alpha}{(1+\tau)^2} = \frac{\alpha c_j}{(1+\tau)^2 c_j} \leq \frac{\alpha c_j}{(1+\tau)c_j'} \leq \frac{p_j}{c_j'} \leq \frac{(1+\tau)\beta c_j}{c_j'} \leq \frac{(1+\tau)\beta c_j}{c_j/(1+\tau)} = (1+\tau)^2 \beta, \]
showing that probability of sampling any particular element lies between $\alpha/(1+\tau)^2$ and $(1+\tau)^2 \beta$, with perfectly uniform sampling.
Since in fact we perform almost-uniform sampling with tolerance $\epsilon$, this adds a multiplicative error of $(1+\epsilon)$ to each term.\\

Next we show that if the algorithm returned False, then LQCI instance is not feasible.
The algorithm can return False on lines (4) or (8).
Suppose it returns False on line (4).
Then we have $\sum_{j=1}^{b} \frac{\alpha}{1+\tau} c_j > 1$.
An improvising distribution would have to assign at least probability $\alpha$ to every element, and so would have to assign at least $\sum_{j=1}^{b} \alpha c_j'$ probability total.
However, by our model counting discussion earlier, we have $\sum_{j=1}^{b} \alpha c_j' \geq \sum_{j=1}^{b} \frac{\alpha}{1+\tau} c_j > 1$, requiring a probability greater than 1, which is impossible; so no improvising distribution exists and the instance is infeasible.
Similarly, suppose the algorithm returns False on line (8).
Then $\sum_{j=1}^{b} p_j <1$, which means that on line (6), each $p_k$ was assigned probability $(1+\tau)\beta c_k$ (since if any $p_k$ were assigned $1-\sum_{i \neq k} p_k$, the total probability would be 1 thereafter).
An improvising distribution would assign probability at most $\beta$ to each element, and so would be able to assign a maximum probability of $\sum_{j=1}^{b} \beta c_j'$.
Then once again by our model counting inequality we have $\sum_{j=1}^{b} \beta c_j' \leq \sum_{j=1}^{b} (1+\tau) \beta j_i < 1$, and so the instance is once again infeasible.\\

Finally we show that any distribution satisfying the randomness parameters $\alpha, \beta$ has expected cost at least $Lo$.
As discussed earlier, since $Lo$ has been assigned, the $p_j$'s form a probability distribution.
Moreover, each $p_j$ is at least $\alpha c_j/(1+\tau)$ and at most $(1+\tau)\beta c_j$. Let $G$ be the distribution returned by the greedy cost construction.
By cost-minimality of $G$ (Lemma \ref{lemma:mincost}), it suffices to show $Lo \leq E[\CF(w) \mid w\leftarrow G]$.
Let $p_j'$ be the probability with which the greedy distribution samples an element from bucket $j$, and let the expected cost of an item sampled from $G$ conditioned on sampling from bucket $j$ be $r_j'$.
We show that $Lo \leq \sum_{j=1}^{b} p_j' r_j'$.

Our algorithm starts by assigning $\alpha c_j /(1+\tau) $ probability to each bucket, and then increases the weight of each bucket starting with the cheapest to up to $(1+\tau)\beta c_j$ until a probability of 1 is reached. Let $m$ denote the number of times a bucket is assigned $(1+\tau) \beta c_j$ probability (so $0 \leq m \leq b)$. If $m=b$, then every bucket was assigned probability $(1+\tau)\beta c_j$. By the accuracy of the model counts, $p_j = (1+\tau)\beta c_j \geq \beta c_j' \geq p_j'$ for all $j$, and so $1 = \sum_{j=1}^{b} p_j \geq \sum_{j=1}^{b} p_j' = 1$, and thus $p_j = p_j'$ for all $j$. Then as $r^j \leq r_j' < r^{j+1}$, we have $Lo = \sum_{j=1}^{b} p_j r^j \leq \sum_{j=1}^{b} p_j' r_j'$, as desired.

Now suppose instead $m < b$. If $m>0$, then there will be a nonempty initial segment of buckets assigned $(1+\tau)\beta c_j$ probability, and thus as above an initial segment where for $j=1$ to $m$, $p_j \geq p_j'$.
If instead $m=0$, then bucket number 1 was assigned probability $1 - \sum_{j \neq 1} p_j$, and so the rest of the buckets are left with probability $\alpha c_i/(1+\tau)$.
By model counting, for all these buckets the greedy cost construction assigns at least $\alpha c_j' \geq \alpha c_j/(1+\tau) = p_j$, and so $\sum_{j=2}^{b} p_j' \geq \sum_{j=2}^{b} p_j$, which implies $p_1 \geq p_1'$ (as both distributions must sum to 1).
Therefore, in both cases, there is a length-$(1+m)$ segment $1, \dots , m$ for which $p_j \geq p_j'$. Then as $\sum_{j=1}^{b} p_j = \sum_{j=1}^{b} p_j' = 1$, we also have that $\sum_{j=1}^{m} p_j - p_j' = \sum_{j=m+1}^{b} p_j' - p_j \geq 0$. Then we have:
\begin{align*}
\sum_{j=1}^{b} p_j' r_j' - \sum_{j=1}^{b} p_j r^{j-1} &\geq \sum_{j=1}^{b} p_j' r^{j-1} - \sum_{j=1}^{b} p_j r^{j-1} \\
&= \left(\sum_{j=1}^{m} p_j' r^{j-1} + \sum_{j=m+1}^{b} p_j' r^{j-1}\right) - \left(\sum_{j=1}^{m} p_j r^{j-1} + \sum_{j=m+1}^{b} p_j r^{j-1}\right) \\
&= -\sum_{j=1}^{m} (p_j - p_j') r^{j-1} + \sum_{j=m+1}^{b} (p_j' - p_j) r^{j-1} \\
&\geq -\left(\sum_{j=1}^{m} p_j - p_j'\right)r^{m-1} + \left(\sum_{j=m+1}^{b} p_j' - p_j\right)r^m \\
&= -\left(\sum_{j=1}^{m} p_j - p_j'\right)r^{m-1} + \left(\sum_{j=1}^{m} p_j - p_j'\right)r^m \\
&= \left(\sum_{j=1}^{m} p_j - p_j'\right)(r^m - r^{m-1}) \\
&\geq 0
\end{align*}
So $Lo = \sum_{j=1}^{b} p_j r^{j-1}$ is less or than equal to the expected cost of the distribution $G$ returned by the greedy cost construction. This concludes the proof. \qed
\end{proof}

\begin{lemma}
\label{lemma:label_costerror}
Let $\CII$ be a Boolean LQCI instance. Let $D_i$ and $Lo_i$ be the distribution and $Lo$ value returned by running Algorithm 1 (with confidence $1-\delta$, tolerance $\tau$, and bucket width $r$) on label $i$. Suppose further that you run the greedy label construction using $Lo_i$ as label $i$'s expected cost, and $p_i$ is the probability assigned by the construction to label $i$. This results in a distribution $\widetilde{D}$ over words. Let $Low = \sum_{i=1}^{|\Omega|} p_i Lo_i$. Then the expected cost of $\widetilde{D}$ satisfies $Low \leq E[\CF(w)|w \leftarrow \widetilde{D} ] \leq r \cdot Low$. Moreover, for any improvising distribution $D'$ for the instance, we have that $Low \leq E[\CF(w) | w\leftarrow D']$.
\end{lemma}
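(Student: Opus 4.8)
The plan is to decompose the expected cost of $\widetilde{D}$ label-by-label and reduce both claims to the per-label guarantees of Lemma \ref{lemma:agca} combined with the cost-minimality of the greedy label construction. Throughout I would condition on the event---holding with probability at least $(1-\delta)^{b|\Omega|}$---that every model count made across all $|\Omega|$ invocations of Algorithm \ref{alg:approx_greedy_cost} lies within its tolerance, so that the deterministic guarantees of Lemma \ref{lemma:agca} are simultaneously available for each label.

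For the upper/lower sandwich, observe that sampling from $\widetilde{D}$ amounts to drawing a label $i$ with probability $p_i$ and then a word from $D_i$, so by linearity of expectation $E[\CF(w) \mid w \leftarrow \widetilde{D}] = \sum_{i=1}^{|\Omega|} p_i \, E[\CF(w) \mid w \leftarrow D_i]$. Applying part (1) of Lemma \ref{lemma:agca} to each label gives $Lo_i \leq E[\CF(w) \mid w \leftarrow D_i] \leq r\,Lo_i$, and multiplying by $p_i \geq 0$ and summing yields $Low = \sum_i p_i Lo_i \leq E[\CF(w) \mid w \leftarrow \widetilde{D}] \leq r \sum_i p_i Lo_i = r\,Low$, which is exactly the desired bound.

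For the final inequality, let $D'$ be any improvising distribution, let $p_i'$ be the marginal it assigns to $I_i$, and let $E_i' = E[\CF(w) \mid w \in I_i, w \leftarrow D']$ be its conditional expected cost on that class. Since $D'$ satisfies the randomness-over-words constraint, its conditional distribution on $I_i$ meets the randomness parameters $\hat{\alpha}_i,\hat{\beta}_i$, so part (4) of Lemma \ref{lemma:agca} gives $Lo_i \leq E_i'$ for every $i$; hence $E[\CF(w) \mid w \leftarrow D'] = \sum_i p_i' E_i' \geq \sum_i p_i' Lo_i$. It then remains to show $\sum_i p_i' Lo_i \geq \sum_i p_i Lo_i = Low$. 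This is precisely the cost-minimality of the greedy label construction: because $D'$ satisfies randomness over labels, the marginals $\{p_i'\}$ obey $\lambda \leq p_i' \leq \rho$ with $\sum_i p_i' = 1$, i.e. they form a feasible label distribution, and the construction chooses $\{p_i\}$ to minimize $\sum_i p_i Lo_i$ over exactly this feasible set (putting the largest allowed mass on the labels with smallest $Lo_i$). I would reuse verbatim the exchange argument already given for the label phase in the proof of Theorem \ref{thm:lqci_greedy}, now with the proxies $Lo_i$ in place of the true expected costs. Chaining the two inequalities gives $E[\CF(w) \mid w \leftarrow D'] \geq Low$.

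The only genuine obstacle is justifying $\sum_i p_i' Lo_i \geq Low$ when the greedy label construction is fed the \emph{approximate} proxies $Lo_i$ rather than exact conditional costs; this is harmless because the construction's optimality depends only on the sorted order of the numbers supplied and on the box-plus-simplex constraint set, not on those numbers being exact, so the same rearrangement inequality applies. The remaining care is bookkeeping: ensuring the good model-counting event is shared across all labels so that $Lo_i \leq E_i'$ holds for every $i$ at once, which is handled by the $(1-\delta)^{b|\Omega|}$ conditioning above and folded into the overall confidence when this lemma is invoked in Theorem \ref{thm:approx_lqci}.
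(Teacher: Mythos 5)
Your proof is correct, and its first half (the sandwich $Low \leq E[\CF(w)\mid w\leftarrow \widetilde{D}] \leq r\cdot Low$ via linearity over labels and part (1) of Lemma~\ref{lemma:agca}) is exactly the paper's argument. For the second claim your routing differs slightly from the paper's: you decompose an \emph{arbitrary} improvising distribution $D'$ into its label marginals $p_i'$ and conditionals, apply part (4) of Lemma~\ref{lemma:agca} to each conditional to get $E_i' \geq Lo_i$, and then use optimality of the greedy label construction (fed the proxies $Lo_i$) over the box-plus-simplex feasible set to conclude $\sum_i p_i' Lo_i \geq \sum_i p_i Lo_i = Low$. The paper instead bounds the cost of the greedy LQCI construction $G$: it applies Lemma~\ref{lemma:agca}~(4) to $G$'s per-label greedy cost distributions to get $k_i \geq Lo_i$, chains $E[\CF(w)\mid w\leftarrow G] = \sum_i p_i' k_i \geq \sum_i p_i' Lo_i \geq \sum_i p_i Lo_i = Low$, and then transfers the bound to every improvising $D'$ by invoking cost-minimality of $G$ (Lemma~\ref{lemma:mincost}). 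Both arguments rest on the same two ingredients---Lemma~\ref{lemma:agca}~(4) and the fact that the greedy label construction minimizes a linear objective whose coefficients are whatever per-label numbers it is given---but yours is marginally more self-contained, since it never needs Lemma~\ref{lemma:mincost} or the existence of $G$, while the paper's reuses machinery it has already established. Your observation that greedy-label optimality depends only on the sorted order of the supplied values (so feeding it proxies $Lo_i$ is harmless) is the right justification, and your explicit conditioning on the joint model-counting event across all $|\Omega|$ labels is a point the paper leaves implicit; the only cosmetic gap is the degenerate case $p_i'=0$ (possible when $\lambda = 0$), where the conditional $E_i'$ is undefined but the corresponding term contributes nothing to either sum.
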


\begin{proof} 
By Lemma \ref{lemma:agca}, we know that the distribution $D_i$ returned by running Algorithm 1 on label $i$ satisfies $Lo_i \leq E[\CF(w) | w \leftarrow D_i] \leq r Lo_i$ for all $i$. The expected cost of $\widetilde{D}$ is given by $E[\CF(w)| w\leftarrow \widetilde{D}] = \sum_{i=1}^{|\Omega|} p_i E[\CF(w) | w \leftarrow D_i]$, and so we have $Low = \sum_{i=1}^{|\Omega|} p_i Lo_i \leq \sum_{i=1}^{|\Omega|} p_i E[\CF(w) | w \leftarrow D_i] \leq \sum_{i=1}^{|\Omega|} p_i r Lo_i = r \cdot Low$, giving the first claim.\\

To show the second claim, it suffices to show that the distribution $G$ returned by the greedy LQCI construction from Section \ref{sec:theory} satisfies $Low \leq E[\CF(w) \mid w \leftarrow G]$, as $G$ has minimal cost among improvising distributions (Lemma \ref{lemma:mincost}). For each label we have an associated $Lo_i$ value (from Algorithm 1), and also the expected cost for sampling from that label according to the greedy construction, call this value $k_i$. Moreover, from Lemma \ref{lemma:agca} (4) we have that  $Lo_i \leq k_i$. Let $p_i'$ be the probability assigned to label $i$ with expected cost $k_i$ by the greedy construction. Recall that for a set of fixed label costs, the greedy label construction is optimal for minimizing expected costs, so in particular, $\sum_{i=1}^{|\Omega|} p_i' Lo_i \geq \sum_{i=1}^{|\Omega|} p_i Lo_i$. Using this we can show that $Low$ is a lower bound.
We have:
      \[ E[\CF(w)|w\leftarrow G] =  \sum_{i=1}^{|\Omega|} p_i' k_i \geq \sum_{i=1}^{|\Omega|} p_i' Lo_i \geq \sum_{i=1}^{|\Omega|} p_i Lo_i = Low \]
\qed
\end{proof}

\textbf{Theorem \ref{thm:approx_lqci} Proof of Correctness}
\begin{proof}
We take as input a confidence $1- \delta \in (0,1]$, randomness tolerance $\gamma > 0$, and cost tolerance $\zeta > 0$.
Fix the bucket ratio $r = 1 + \zeta$, and set the number of buckets $b = \ceil{\log_r (2^{|y|}))}$ where $|y|$ is the size in bits of the cost encoding.
Next form a Boolean formula $\phi_i (x,y,z)$ for each label class (of which there are polynomially many) as described in Section \ref{sec:approx_lqci_alg}, and run Algorithm 1 on each such formula with model count confidence $1 - d$ such that $(1-d)^{|\Omega|b} \geq 1 - \delta$, and model count tolerance $\tau$ such that $(1+\tau)^3 \leq 1 + \gamma$, i.e. such that we have confidence at least $1-\delta$ that all our model counts are accurate within a factor of $1+\tau$.
We assume this accuracy for the rest of the argument.
Now by Lemma \ref{lemma:agca}, running almost-uniform sampling with tolerance $\epsilon = \tau$ for each label $i$ we obtain a distribution $D_i$ over words in the label as well as a lower bound of that distribution's expected cost $Lo_i$. Then using the greedy label construction from Section \ref{sec:theory} over these $Lo_i$ values, we obtain a distribution over labels, and using this along with $D_i$ a distribution over words $\widetilde{D}$.
If any iteration of Algorithm 1 fails, or if the greedy label construction fails, or if the associated $Low$ value of $\widetilde{D}$ from Lemma \ref{lemma:label_costerror} exceeds the cost parameter $\cost$, we return failure.\\

(1) Since the words sampled are obtained by sampling from solutions to $\{ \exists z. \phi_i (x,y,z) \}_{i=1}^{|\Omega|}$, and each such solution satisfies the hard constraint $\HC$, if the procedure above returns a distribution $\widetilde{D}$, we trivially have $\Pr [ \HC(w)| w \leftarrow \widetilde{D}] = 1$.

(2) By Lemma \ref{lemma:label_costerror}, $E[\CF(w) | w \leftarrow \widetilde{D}] \leq r \cdot Low$, and if above procedure returns a distribution we also have $Low \leq c$, so we have $E[\CF(w) | w \leftarrow \widetilde{D}] \leq rc = (1 + \zeta) c$.

(3) Since the greedy label construction succeeded, we have $\forall i \in \{1, \dots, |\Omega| \}, \ \lambda \leq \Pr{[w \in I_i \mid w \leftarrow \widetilde{D}]} \leq \rho $.

(4) By Lemma \ref{lemma:agca} (3), for each distribution $D_i$ over a label $i$, we have that $\hat{\alpha}_i/((1+\epsilon)(1+\tau)^2) \leq \Pr[y =w | w\leftarrow D_i] \leq (1+\epsilon)(1+\tau)^2\hat{\beta}_i$. Since $(1+\epsilon)(1+\tau)^2 = (1+\tau)^3 \leq 1+\gamma$, this means that we have $\forall i \in \{1, \dots, |\Omega| \}, \ \forall y \in I_i$,
\[ \hat{\alpha}_i/(1+\gamma) \leq \Pr[y = w \mid w \in I_i, w \leftarrow \widetilde{D}] \leq (1+\gamma)\hat{\beta}_i \]

Moreover, if we returned failure, we argue that $\CII$ is infeasible (as above, with confidence at least $1-\delta$).
The procedure can return failure in three situations.
First, if Algorithm 1 returns failure, whereby using Lemma \ref{lemma:agca} (3) we have that the instance is infeasible.
Second, if the greedy label construction fails, which can only happen if condition (1) in Theorem \ref{thm:lqci_greedy} does not hold (and by the same theorem the instance is therefore infeasible).
Third, if $Low > \cost$, in which case by Lemma \ref{lemma:label_costerror} assuming there exists an improvising distribution $D'$ yields $\cost < Low \leq E[\CF(w)|w \leftarrow D']$, which is a contradiction as $D'$ violates the cost condition. \qed
\end{proof}

\textbf{Theorem \ref{thm:approx_lqci} Runtime}
\begin{proof}
Fix our desired confidence $1-\delta$, randomness tolerance $\gamma$, cost tolerance $\zeta$, and our formula size $|F|$.
As above, we need to choose the tolerance parameters $\epsilon,\tau$ such that $(1+\epsilon)(1+\tau)^2 \leq (1+\gamma)$, and for simplicity we take $\epsilon = \tau$.
Then it will suffice to use $\tau = \gamma/4$ for small enough $\gamma$, for example, so that $1/\tau = O(1/\gamma)$.
Similarly, we need to choose the confidence $1-d$ so that $(1-d)^{|\Omega| b} \ge 1 - \delta$ (with $b$ the number of buckets), which is possible with a $d$ of size satisfying $1/d = O(|\Omega|b/\delta)$.

The bucket ratio is $r = 1 + \zeta$ and the number of buckets for each label is $b = \ceil{\log_r(2^y)} = O(|F|/\log(r)) = O(|F|/\log(1+\zeta)) = O(|F|/\zeta)$, so there are $|\Omega|b = \textit{poly}(|\CII|, 1/\zeta)$ buckets in total.
Using \texttt{ApproxMC} \cite{approxmc_13} to count each bucket with confidence $1-d$ and tolerance $\tau$ takes $\textit{poly}(|F|, 1/\tau, \log(1/d))$ time.
Plugging in the asymptotic bounds for $1/\tau$ and $1/d$, the total time for counting all buckets is $\textit{poly}(|\CII|, 1/\zeta, 1/\gamma, \log(1/\delta))$.

Finally, the returned improviser can sample words in the same time, using \texttt{UniGen}~\cite{unigen_14} to sample approximately in time polynomial in $|F|$ and $1/\epsilon = O(1/\gamma)$.
(Or for sufficiently-small $\epsilon$, instead using the BGP algorithm \cite{bgp_sampling} to sample \emph{exactly} in time polynomial in $|F| = O(|\CII|)$; see footnote \ref{note:unigen_proviso}). \qed
\end{proof}

\textbf{Theorem \ref{thm:melqci_algorithm} Full Proof}
\begin{proof}
	This problem can be rephrased as an optimization problem.
	Note that the conditional distribution within a cost class does not affect the expected cost of a distribution and the maximum-entropy piecewise distribution is uniform over each piece.
	To this end we create an optimization variable $D(i,k)$ that represents the probability of selecting an element in the cost class $I_{i,k}$.
	With the assumption of a uniform conditional distribution within each cost class, which again we know to be entropy-maximizing, the total entropy of our resulting distribution $D$ will be:

	\begin{align*}		
    	H(D) &= -\sum_{w \in I} D(w) \log(D(w))	\\
    	&= -\sum_{i = 1}^{|\Theta|} \sum_{k=1}^{|\Omega|} \sum_{w \in I_{i,k}} D(w) \log(D(w))	 					\\
    	&= -\sum_{i = 1}^{|\Theta|} \sum_{k=1}^{|\Omega|} D(i,k) \log \left( \frac{D(i,k)}{|I_{i,k}|} \right) 	\\
    	&= -\sum_{i = 1}^{|\Theta|} \sum_{k=1}^{|\Omega|} D(i,k) \left( \log(D(i,k)) - \log(|I_{i,k}|) \right) 	\\
    	&= -\sum_{i = 1}^{|\Theta|} \sum_{k=1}^{|\Omega|} D(i,k) \log(D(i,k)) - D(i,k) \log(|I_{i,k}|) 	\\
    	&= \sum_{i = 1}^{|\Theta|} \sum_{k=1}^{|\Omega|}  -D(i,k) \log(D(i,k))+ D(i,k) \log(|I_{i,k}|)
	\end{align*}

	Adding optimization constraints to account for all the requirements of an improvising distribution, we get the following optimization problem:
    \begin{align*}
		&\text{Minimize} \
		-H(D) = - \sum_{i = 1}^{|\Theta|} \sum_{k=1}^{|\Omega|}  -D(i,k) \log(D(i,k))+ D(i,k) \log(|I_{i,k}|) 	\\
		&\text{subject to}	\\
		\tag{C1}  & \qquad \qquad \sum_{i = 1}^{|\Omega|} \sum_{k=1}^{|\Theta|} \mathcal{K}(I_{i,k}) D(i,k) - \cost  \leq 0			\\
		\tag{C2}  & \qquad \qquad \forall \ i \in \{ 1, \dots, |\Omega| \} , - \sum_{k = 1}^{|\Theta|} D(i, k) + \lambda \leq 0 	\\
		\tag{C3}  & \qquad \qquad \forall \ i \in \{ 1, \dots, |\Omega| \} , \ \sum_{k = 1}^{|\Theta|} D(i, k) - \rho \leq 0\\
		\tag{C4}  & \qquad \qquad \forall \ i \in \{ 1, \dots, |\Omega| \}, \forall \ k \in \{ 1, \dots, |\Theta| \}, -D(i,k) \leq 0					\\
		\tag{C5}  & \qquad \qquad \sum_{i = 1}^{|\Omega|} \sum_{k=1}^{|\Theta|} D(i,k) - 1 = 0\\
		\tag{C6}  & \qquad \qquad \forall \ i \in \{ 1, \dots, |\Omega| \}, \forall \ k \in \{ 1, \dots, |\Theta| \}, \ \text{if} \ |I_{i,k}| = 0, D(i,k) = 0 \\
	\end{align*}
	
	As this takes the form of an optimization problem minimizing a separable convex equation with linear constraints, we can solve this in polynomial time with only a logarithmic dependency on the maximum additive error $\tau$ allowed using an algorithm by Chubanov \cite{chubanov:convex_opt_poly_alg}.
	All we require is an initial feasible distribution, which can be calculated using the greedy LQCI construction (efficiently, according to Theorem~\ref{thm:exact_scheme}) for the equivalent LQCI problem.
	
	We must now show that these two problems are equivalent.
	The objective function being minimized is the negation of the entropy of the distribution defined by the $D(i,k)$ probabilities, as we saw above.
	So we only need show that a distribution assigning probability $D(i,k)$ to cost class $I_{i,k}$ (distributed uniformly over words in that class) is an improvising distribution if and only if it satisfies constraints \textbf{(C1-C6)}.
	
	\textbf{($\Rightarrow$) If a distribution is improvising for the MELQCI problem, it satisfies constraints \textbf{(C1-C6)}.}\\
	Let $D$ be an improvising distribution for a MELQCI problem $\CII$, and define $D(i,k) = \Pr [w \in I_{i,k} \mid w \leftarrow D]$.
	As $D$ is improvising, the expected cost of a word sampled from it is at most $c$, which implies \textbf{(C1)} is satisfied.
	$D$ must also have marginal label probabilities between $\lambda$ and $\rho$, which implies \textbf{(C2-C3)} are satisfied.
	Since $D$ is a probability distribution, all its probabilities are nonnegative and sum to 1, satisfying \textbf{(C4-C5)}.
	Finally, if $|I_{i,k}| = 0$ then no word is contained in $I_{i,k}$, so $D(i,k) = 0$ by definition and therefore \textbf{(C6)} is satisfied.
	
	\textbf{($\Leftarrow$) If a distribution satisfies constraints \textbf{(C1-C6)}, it is improvising for the MELQCI problem.}\\
	Given values $D(i,k)$ satisfying conditions \textbf{(C1-C6)}, the corresponding distribution on words $D$ is defined by letting $D(w) = D(i,k) / |I_{i,k}|$ if $w \in I_{i,k}$ (and $D(w) = 0$ if $w$ is not contained in any cost class).
	We need to show that $D$ is an improvising distribution.
	First, by conditions \textbf{(C4-C6)} the probabilities are nonnegative and sum to 1 (note how \textbf{(C6)} ensures that if $D(i,k) > 0$ then there is some $w \in I_{i,k}$), so $D$ is a probability distribution.
	Next, since we only give probability to elements of $I$, the hard constraint is satisfied.
	$D$ also satisfies the cost constraint due to condition \textbf{(C1)}, and the randomness over labels constraint due to conditions \textbf{(C2-C3)}.
	So $D$ is an improvising distribution for $\CII$. \qed
\end{proof}

\end{document}